\newtheorem{theorem}{Theorem}
\newtheorem{lemma}[theorem]{Lemma}
\newtheorem{remark}[theorem]{Remark}
\newtheorem{definition}[theorem]{Definition}
\newtheorem{corollary}[theorem]{Corollary}
\newtheorem{proposition}[theorem]{Proposition}
\newtheorem{example}[theorem]{Example}
\newcommand{\Cov}{\operatorname{Cov}}
\newcommand{\diag}{\operatorname{diag}}
\newcommand{\pd }{\mathbf{S}^+}
\newcommand{\psd }{\mathbf{S}_0^+}
\newcommand{\sym }{\mathbf{S}}
\newcommand{\Tr}{\operatorname{Tr}}
\newcommand{\id}{\operatorname{id}}
\newcommand{\EE}{\mathbb{E}}
\renewcommand{\tilde}{\widetilde}
\newcommand{\ospan}{\operatorname{span}}
\newcommand{\ocov}{\operatorname{Cov}}
\title{ Equality cases in the Anantharam--Jog--Nair inequality}
\author{Efe Aras, Thomas A.~Courtade and Albert Zhang\\University of California, Berkeley}
\date{~}
\begin{document}

\maketitle

\begin{abstract}
Anantharam, Jog and Nair recently unified the Shannon--Stam inequality and the entropic form of the Brascamp--Lieb inequalities under a common inequality.  They left open the problems of extremizability and  characterization of extremizers.  Both questions are resolved in the present paper.
\end{abstract}

\section{Preliminaries}
We begin by briefly fixing notation and definitions that will be needed throughout.   A Euclidean space $E$  is a finite-dimensional Hilbert space over the real field, equipped with Lebesgue measure.  For a probability measure $\mu$ on $E$, absolutely continuous with respect to Lebesgue measure, and a random vector $X\sim \mu$, we define the Shannon entropy
$$
h(X) \equiv h(\mu) :=-\int_E \log\left( \frac{d\mu}{dx}\right)d\mu, 
$$
provided the integral exists.  If $\mu$ is not absolutely continuous with respect to Lebesgue measure, we adopt the convention that $h(\mu) := -\infty$.  We let $\mathcal{P}(E)$ denote the set of  probability measures on $E$ having finite entropies and second moments.  When there is no cause for ambiguity, we adopt the usual notational convention where a random vector $X$ and its law $\mu$ are used interchangeably.  So, for example, writing $X\in \mathcal{P}(E)$ means that $X$ is a random vector taking values in $E$, having finite entropy and finite second moments.  

For $x,y\in E$, we denote the standard (Euclidean) inner product as $x^T y$, and denote the Euclidean metric by $|\cdot|$ (i.e., $|x| := \sqrt{x^Tx}$).  If $A : E\to  E'$ is a linear map between Euclidean spaces $E,E'$, we let $A^T:E'\to E$ denote its adjoint satisfying
$$
(Ax)^T y = x^T (A^T y),~~~\forall x\in E, y\in E'.  
$$
All of this notation is consistent with the  representation of linear maps as matrices.   We let $\sym(E)$ denote the set of symmetric linear maps from $E$ to itself (i.e., $A\in \sym(E)$ iff $A=A^T$), and $\pd(E)$ denote the subset of positive definite linear maps (i.e., $A\in \pd(E)$ iff $A=A^T$  and $x^T A x>0$ for all nonzero $x\in E$). 

For a random vector $X\sim \mu \in \mathcal{P}(E)$, its covariance is defined as the (positive semidefinite) symmetric linear map
$$
\ocov(X) = \int_E (x-\EE[X]) (x-\EE[X])^T d\mu(x) \in \sym(E),
$$
where $\EE$ denotes expectation (here, with respect to $\mu$).  The Gaussian distribution on $E$ with mean $m$ and covariance $\Sigma\in \pd(E)$ is denoted by $N(m,\Sigma)$.  A Gaussian random vector $X$ is said to be {\bf isotropic} if it has covariance proportional to the identity map.  The standard Gaussian distribution on $E$ is denoted by $\gamma_E$.  

Of course, all Euclidean spaces $E,E'$ of dimensions $m$ and $n$, respectively, can always be identified as $\mathbb{R}^m$ and $\mathbb{R}^n$, respectively.  Moreover, any linear transformation $A: E\to E'$ can be expressed as a real $n\times m$   matrix.  Our notation is chosen to be compatible with this, but for various reasons it is notationally more convenient to state things abstractly.  For example, this avoids ambiguity that can result from referring to two different Euclidean spaces of the same dimension.

Throughout, we consider collections of Euclidean spaces $(E_i)_{i=1}^k$, $(E^j)_{j=1}^m$, and corresponding sets of positive real numbers $\mathbf{c} = (c_i)_{i=1}^k$ and $\mathbf{d} = (d_j)_{j=1}^m$.  A {\bf datum} is a triplet $(\mathbf{c},\mathbf{d},\mathbf{B})$ where $\mathbf{B}=(B_j)_{j=1}^m$ is a collection of linear maps $B_j : E_0 \to E^j$, with common domain $E_0 := \oplus_{i=1}^k E_i$.  Given the structure of $E_0$, we  let $\pi_{E_i}: E_0 \to E_i$ denote the coordinate projections.  A vector $x\in E_0$ will frequently be written in its coordinate representation $x = (x_1, \dots, x_k)$, where $x_i = \pi_{E_i}(x)$, $1\leq i\leq k$.    If $A_i : E_i \to E_i$, $1\leq i \leq k$, are linear maps, then the direct sum of operators $A =    \oplus_{i=1}^k A_i$ is a linear map from $E_0$ to itself and, without confusion, can be denoted as the block-diagonal operator
$$
A = \operatorname{diag}(A_1, \dots, A_k).
$$
For a set $V$, we  let $\id_{V}: V\to V$ denote the identity map from $V$ to itself.  So, as an example of the above, we have  $\id_{E_0} = \oplus_{i=1}^k \id_{E_i} \equiv  \operatorname{diag}(\id_{E_1}, \dots, \id_{E_k})$.  Again, this is all compatible with the representation of linear operators as matrices. 

We conclude this section by recording a few associated definitions for convenience.
\begin{definition}
A subspace $T\subset E_0$ is said to be {\bf product-form} if it can be written as $T = \oplus_{i=1}^k T_i$, where $T_i \subset E_i$ for $1\leq i\leq k$. 
\end{definition}

\begin{definition}   A subspace $T \subset E_0$ is said to be {\bf critical} for   $(\mathbf{c},\mathbf{d},\mathbf{B})$  if it is  product-form, and 
 \begin{align}
\sum_{i=1}^k c_i \dim(\pi_{E_i}T ) = \sum_{j=1}^m d_j \dim(B_j T). \notag 
\end{align}
\end{definition}

\begin{definition}\label{def:EquivalentData}
Two data $(\mathbf{c},\mathbf{d},\mathbf{B})$ and $(\mathbf{c'},\mathbf{d'},\mathbf{B'})$ are said to be {\bf equivalent} if $\mathbf{c}=\mathbf{c'}$, $\mathbf{d}=\mathbf{d'}$, and there exist invertible linear transformations $A_j : E^j \to E^j$  and $C_i : E_i \to E_i$ such that 
\begin{align}
B'_j = A_j^{-1}B_j C^{-1} \hspace{5mm}\mbox{for each $1\leq j \leq m$}, \label{eq:equivalentDatum}
\end{align}
where $C:= \operatorname{diag}(C_1, \dots, C_k)$.
\end{definition}

We remark that, in the special case of $k=1$, the definitions of critical subspaces and equivalent data coincide with those found in \cite{bennett2008brascamp}.  For general $k$, all three definitions coincide with those in \cite{CourtadeLiu21}.

\section{The Anantharam--Jog--Nair inequality}
  For a datum $(\mathbf{c},\mathbf{d},\mathbf{B})$, Anantharam, Jog and Nair (AJN) characterized the  best (i.e., smallest) constant $C$ such that the entropy inequality  
\begin{align}
\sum_{i=1}^k c_i h(X_i) \leq  \sum_{j=1}^m d_j h(B_j X) + C \label{eq:multimarginalindependent}
\end{align}
holds for any choice of independent random vectors $X_i   \in \mathcal{P}(E_i)$, $1\leq i \leq k$, with $X:=(X_1, \dots, X_k)$.  This inequality unifies the Shannon--Stam inequality \cite{shannon48, stam59} and the entropic formulation of the (Euclidean) Brascamp--Lieb inequalities \cite{carlen2009subadditivity, brascamp1976best} under a common framework.  Extending the Gaussian saturation properties enjoyed by each (see, e.g., \cite{CarlenSoffer} and \cite{lieb1990gaussian}), Anantharam, Jog and Nair  showed that the best constant can be computed by considering only Gaussian $X_i$'s, and gave necessary and sufficient conditions for finiteness.   More precisely, their main result is the following:
\begin{theorem}[AJN inequality \cite{anantharam2019unifying}]\label{thm:AJNinequality} Fix a datum $(\mathbf{c},\mathbf{d},\mathbf{B})$. 
For any random vectors $X_i \in \mathcal{P}(E_i)$, $1\leq i \leq k$ and  $X = (X_1,\dots, X_k)$, 
\begin{align} 
\sum_{i=1}^k c_i h(X_i) - \sum_{j=1}^m d_j h(B_j X) \leq  C_g(\mathbf{c},\mathbf{d},\mathbf{B}), \label{eq:AJNinequality}
\end{align}
where $C_g(\mathbf{c},\mathbf{d},\mathbf{B})$ is defined as the supremum of the LHS over independent Gaussian vectors $(X_i)_{i=1}^k$.  Moreover, the  constant $C_g(\mathbf{c},\mathbf{d},\mathbf{B})$ is finite if and only if the following two conditions are satisfied.
\begin{enumerate}[(i)]
\item {\bf Scaling condition:}  It holds that
\begin{align}
\sum_{i=1}^k c_i \dim(E_i) = \sum_{j=1}^m d_j \dim(E^j). \label{eq:ScalingCond}
\end{align}
\item{\bf Dimension condition:}    For all product-form subspaces $T  \subset E_0$,  
 \begin{align}
\sum_{i=1}^k c_i \dim(\pi_{E_i} T ) \leq \sum_{j=1}^m d_j \dim(B_j T).\label{eq:DimCond}
\end{align}
\end{enumerate}
\end{theorem}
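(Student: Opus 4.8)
The plan is to establish the two assertions separately: first the finiteness dichotomy, which concerns only Gaussians and is at bottom a question in linear algebra and convex geometry, and then the saturation inequality~\eqref{eq:AJNinequality}, which carries the analytic content.

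For the dichotomy, restrict to centered Gaussians $X_i\sim N(0,\Sigma_i)$, $\Sigma_i\in\pd(E_i)$, set $\Sigma:=\diag(\Sigma_1,\dots,\Sigma_k)$, and observe that
\[
C_g(\mathbf{c},\mathbf{d},\mathbf{B})=\tfrac12(\log 2\pi e)\Big(\sum_i c_i\dim E_i-\sum_j d_j\dim E^j\Big)+\tfrac12\sup_{\Sigma}G(\Sigma),\qquad G(\Sigma):=\sum_i c_i\log\det\Sigma_i-\sum_j d_j\log\det\!\big(B_j\Sigma B_j^\top\big).
\]
If some $B_j$ is not surjective, then $\det(B_j\Sigma B_j^\top)\equiv 0$ and $C_g=+\infty$, whereas the dimension condition~\eqref{eq:DimCond} applied to $T=E_0$ together with the scaling condition~\eqref{eq:ScalingCond} forces every $B_j$ surjective; so we may assume surjectivity. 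Under $\Sigma\mapsto\lambda\Sigma$ the quantity $G$ changes by $(\log\lambda)\big(\sum_i c_i\dim E_i-\sum_j d_j\dim E^j\big)$, so failure of \eqref{eq:ScalingCond} makes $\sup_\Sigma G=+\infty$; and, granting \eqref{eq:ScalingCond}, if \eqref{eq:DimCond} fails for a product-form subspace $T=\oplus_iT_i$, then with $P_{T_i}$ the orthogonal projection onto $T_i$ the choice $\Sigma_i(\epsilon):=\epsilon^{-1}P_{T_i}+P_{T_i^\perp}$ yields $G(\Sigma(\epsilon))=-(\log\epsilon)\big(\sum_i c_i\dim T_i-\sum_j d_j\dim B_jT\big)+O(1)$ as $\epsilon\downarrow 0$, which is unbounded. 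This proves necessity of both conditions. The converse — that \eqref{eq:ScalingCond} and \eqref{eq:DimCond} imply $\sup_\Sigma G<\infty$ — is the Brascamp--Lieb finiteness theorem in the present multi-marginal guise; I would quote the structure theory of \cite{bennett2008brascamp} (the case $k=1$) and its extension in \cite{CourtadeLiu21} (general $k$), or run an induction on $\dim E_0$: when \eqref{eq:DimCond} is an equality for some nontrivial proper product-form $T$ (a \emph{critical} subspace) one splits the datum along $T$ and $T^\perp$ and recurses, and when \eqref{eq:DimCond} is strict for every such $T$ a compactness argument for $G$ on the block-diagonal positive-definite $\Sigma$'s modulo the scaling $\Sigma\mapsto\lambda\Sigma$ shows the supremum is attained, hence finite.

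For the saturation inequality, the approach is heat-flow monotonicity, as in the standard proof of the entropic Brascamp--Lieb inequality. We may assume $C_g<\infty$ (otherwise there is nothing to prove), hence by the criterion just proved that every $B_j$ is surjective; translating, each $X_i$ may be taken centered, and we write $G_Y$ for the Gaussian with the covariance of a centered vector $Y$. Running each $X_i$ through the Ornstein--Uhlenbeck semigroup toward $G_{X_i}$ produces independent $X_i^t$ with $X_i^0=X_i$, with $\Cov(X_i^t)\equiv\Cov(X_i)$, and with $h(X_i^t)\uparrow h(G_{X_i})$; driven by the same Gaussian, $X^t=(X_1^t,\dots,X_k^t)$ flows toward $(G_{X_1},\dots,G_{X_k})$, and $B_jX^t$ is itself an Ornstein--Uhlenbeck evolution, toward $G_{B_jX}$, with $h(B_jX^t)\uparrow h(G_{B_jX})$. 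Hence, writing $\delta(X):=\sum_i c_i h(X_i)-\sum_j d_j h(B_jX)$, we have $\delta(X)=\delta(X^0)$ and $\delta(X^t)\to\sum_i c_i h(G_{X_i})-\sum_j d_j h(G_{B_jX})\le C_g$, so it suffices to show $t\mapsto\delta(X^t)$ is nondecreasing. By the de Bruijn-type identity for this flow, $\tfrac{d}{dt}h(X_i^t)=J(X_i^t\,\|\,G_{X_i})$ and $\tfrac{d}{dt}h(B_jX^t)=J(B_jX^t\,\|\,G_{B_jX})$, where $J(\,\cdot\,\|\,\cdot\,)$ is the (commonly normalized) relative Fisher information, so monotonicity reduces to the key estimate
\[
\sum_{i=1}^k c_i\,J(X_i\,\|\,G_{X_i})\ \ge\ \sum_{j=1}^m d_j\,J(B_jX\,\|\,G_{B_jX})
\]
for every independent tuple $X=(X_1,\dots,X_k)$. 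Proving this relative-Fisher-information inequality under the scaling and dimension conditions is the step I expect to be the main obstacle: for $m=1$ it specializes to Stam's inequality behind the Shannon--Stam / entropy-power inequality, and for $k=1$ to the Fisher-information inequality underlying the entropic Brascamp--Lieb inequality (Carlen--Cordero-Erausquin), so it is exactly the mechanism that unifies the two. The subsidiary analytic points — finiteness of the relative Fisher informations for $t>0$ and the legitimacy of differentiating under the integral — are routine given the smoothing of the Ornstein--Uhlenbeck semigroup and the moment bounds built into $\mathcal{P}(E_i)$.

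Finally, two remarks on alternative routes, both resting on the same inequality. A related doubling/central-limit argument in the style of Lieb reaches the same conclusion: with an independent copy $X'$ and $\theta\in[0,\pi/2]$ one forms $U_i:=\cos\theta\,X_i+\sin\theta\,X_i'$ and $V_i:=-\sin\theta\,X_i+\cos\theta\,X_i'$, uses orthogonal invariance and subadditivity of entropy to reduce $\delta(U),\delta(V)\le\sup_X\delta(X)$ to the single-letter inequality $\sum_i c_i\,I(U_i;V_i)\ge\sum_j d_j\,I(B_jU;B_jV)$ (with $I(X;Y)=h(X)+h(Y)-h(X,Y)$ the mutual information) whose infinitesimal ($\theta\to0$) form is again the relative-Fisher-information estimate above, and then iterates the doubling and invokes the entropic central limit theorem; the core inequality is the same, and the bookkeeping around the finiteness of $\sup_X\delta$ is precisely what the heat-flow argument renders unnecessary. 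Alternatively one may avoid reproving that estimate from scratch by importing Gaussian saturation from the classical cases it subsumes — the entropic Brascamp--Lieb inequality \cite{carlen2009subadditivity} ($k=1$), the Shannon--Stam inequality \cite{shannon48,stam59} ($m=1$), and the forward--reverse Brascamp--Lieb inequalities (cf.\ \cite{CourtadeLiu21}) — once one checks that an arbitrary AJN datum embeds faithfully into one of these frameworks. In every case the genuine difficulty is the Fisher- (equivalently, mutual-) information inequality; the finiteness dichotomy is soft once the Gaussian reduction is in hand.
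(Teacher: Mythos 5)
First, a point of reference: the paper does not prove Theorem \ref{thm:AJNinequality} at all --- it is imported wholesale from \cite{anantharam2019unifying} and used as a black box --- so there is no internal proof to compare yours against. Judged on its own terms, your proposal has one sound piece and one fatal gap. The sound piece is the necessity half of the finiteness dichotomy: the reduction of $C_g$ to the log-determinant functional $G(\Sigma)$, the scaling argument $\Sigma\mapsto\lambda\Sigma$, and the degeneration $\Sigma_i(\epsilon)=\epsilon^{-1}P_{T_i}+P_{T_i^{\perp}}$ along a product-form subspace violating \eqref{eq:DimCond} are all correct and standard. The sufficiency half is only a sketch (the induction on critical subspaces and the compactness claim in the noncritical case both need real work in the multi-marginal setting, since the splitting must respect the product-form constraint), but citing \cite{bennett2008brascamp} and \cite{CourtadeLiu21} for this is defensible.

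The gap is in the saturation statement, which is the entire analytic content of the theorem. Your heat-flow argument reduces everything to the relative-Fisher-information inequality $\sum_i c_i J(X_i\|G_{X_i})\geq\sum_j d_j J(B_jX\|G_{B_jX})$, which you then explicitly decline to prove (``the step I expect to be the main obstacle''). That inequality \emph{is} the theorem; asserting it is not a proof. Worse, it is doubtful in the generality you need. Fisher-information inequalities of this type are known only under a geometric normalization (e.g.\ $B_jB_j^T=\id_{E^j}$ and \eqref{eq:AJNGeometricOperatorIneq}), and neither relative Fisher information nor your covariance-matched flow is invariant under the block transformations $X_i\mapsto C_iX_i$ that would be needed to reduce a general datum to a geometric one. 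Moreover, that reduction is available only for \emph{extremizable} data (that is Theorem \ref{thm:extremizableIFFgeometricAJN} of this paper, which itself presupposes Theorem \ref{thm:AJNinequality}); there exist data with $C_g<\infty$ that are equivalent to no geometric datum, and for these your monotonicity claim has no visible mechanism. The same criticism applies to your doubling alternative: the ``single-letter'' mutual-information inequality $\sum_i c_iI(U_i;V_i)\geq\sum_j d_jI(B_jU;B_jV)$ is again asserted, not proved, and the actual AJN argument does not proceed by establishing such a pointwise inequality --- it shows (with considerable effort on existence/tightness of optimizing sequences) that any optimizer must satisfy distributional identities under rotation that force Gaussianity. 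In short: the scaffolding is reasonable, but the load-bearing inequality is missing.
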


Anantharam, Jog and Nair left open the question of extremizability. That is, when do there exist random vectors $(X_i)_{i=1}^k$ such that \eqref{eq:AJNinequality} is met with equality, and  what form do any such extremizers take?  The goal of this paper is to answer both questions completely.   The first question is addressed in Section \ref{sec:geom}, and the second in Section \ref{sec:Extremizers}.

The precise characterization of extremizers is somewhat complicated, but the general idea is easily understood in the context of a toy example.  For $\lambda \in (0,1)$, the following   holds:  If $(X,Y)$ is independent of $Z$, and  $Y$ and $Z$ are of the same dimension, then 
\begin{align}
\lambda h(X, Y) + (1-\lambda)  h(Z) \leq \lambda h(X) + h( \lambda^{1/2} Y + (1-\lambda)^{1/2}Z ) .\label{simpleAJN}
\end{align}
This inequality is obtained by a concatenation of subadditivity of entropy   and the Shannon--Stam inequality.  
Restricting attention to cases where all entropies are finite, we can use known equality cases for both to assert that $(X,Y)$ and $Z$ are extremizers in \eqref{simpleAJN} if and only if (i) $X$ and $Y$ are independent; and (ii) $Y$ and $Z$ are Gaussian with identical covariances.

Roughly speaking, all extremizers of the AJN inequality \eqref{eq:AJNinequality} resemble the above example.  That is, extremizers are characterized by a rigid factorization into independent components, where some components can have any distribution, and    the remaining are necessarily Gaussian with covariances that are typically linked in some way.  

Our approach  leverages an assemblage of techniques developed by various researchers.  In particular, the question of extremizability is addressed by identifying a suitable notion of ``AJN-geometricity'',  and showing that all extremizable data are equivalent to AJN-geometric data. This parallels the approach developed by Bennett, Carbery, Christ  and Tao \cite{bennett2008brascamp} for the functional form of the Brascamp--Lieb  inequalities, which by duality \cite{carlen2009subadditivity} can be realized as an instance of  \eqref{eq:AJNinequality}.  The Gaussian saturation property of AJN-geometric data is established by a stochastic argument involving the F\"ollmer drift (see Appendix \ref{sec:FollmersDrift} for definitions and properties), inspired by Lehec's stochastic proof of the Shannon--Stam  inequality \cite{lehec2013representation}.  This stochastic proof lends itself to identifying the structure of extremizers (when they exist), by combining key ideas from Valdimarsson's characterization of optimizers in the functional Brascamp--Lieb  inequalities \cite{valdimarsson08} together with tools from Eldan and Mikulincer's work on the stability of the Shannon--Stam inequality \cite{EldanMikulincer}.

\section{Extremizability and Geometricity}\label{sec:geom}

We first address the question of  when \eqref{eq:AJNinequality} is extremizable.  To make things precise, we say that a datum $(\mathbf{c},\mathbf{d},\mathbf{B})$ is {\bf extremizable} if $C_g(\mathbf{c},\mathbf{d},\mathbf{B})$ is finite and there exist independent $X_i \in \mathcal{P}(E_i)$, $1\leq i \leq k$ such that \eqref{eq:AJNinequality} is met with equality.  We say that $(\mathbf{c},\mathbf{d},\mathbf{B})$ is {\bf Gaussian-extremizable} if  $C_g(\mathbf{c},\mathbf{d},\mathbf{B})$ is finite and  there exist independent Gaussian $(X_i)_{i=1}^k$ meeting  \eqref{eq:AJNinequality}  with equality.

In analogy to definitions made in the context of Brascamp--Lieb inequalities, we define the class of {AJN-geometric} data below.  Their significance to  \eqref{eq:AJNinequality} is the same as that of geometric data to inequalities of Brascamp--Lieb-type.  In particular, we will see that all (Gaussian-)extremizable instances of \eqref{eq:AJNinequality} are equivalent to AJN-geometric data.

\begin{definition}[AJN-Geometric datum]
A datum $(\mathbf{c},\mathbf{d},\mathbf{B})$ is said to be {\bf AJN-geometric} if  
\begin{enumerate}[(i)]
\item $B_j  B_j^T = \id_{E^j}$ for each $1\leq j\leq m$; and 
\item we have the operator identity
 \begin{align}
\sum_{j=1}^m d_j \pi_{E_i} B^T_j  B_j \pi^T_{E_i} = c_i \id_{E_i}, \hspace{5mm}\mbox{for each~}1\leq i\leq k.  \label{eq:AJNGeometricOperatorIneq}
\end{align}
\end{enumerate}
\end{definition}
\begin{remark}\label{rmk:scaling}
Conditions (i)-(ii) together imply  the scaling condition \eqref{eq:ScalingCond}. This can be seen by taking traces in \eqref{eq:AJNGeometricOperatorIneq}, summing from $i=1,\dots, k$, and using the cyclic and linearity properties of trace together with (ii). 
\end{remark}

AJN-geometric data  have the convenient property that $C_g(\mathbf{c},\mathbf{d},\mathbf{B})=0$, and they are extremizable by standard Gaussians.  We summarize as a formal proposition. 
\begin{proposition}\label{prop:AJNGeomCgEquals0}
If $(\mathbf{c},\mathbf{d},\mathbf{B})$ is AJN-geometric,  then $C_g(\mathbf{c},\mathbf{d},\mathbf{B})=0$ and $X\sim N(0,\id_{E_0})$ achieves equality in \eqref{eq:AJNinequality}.
\end{proposition}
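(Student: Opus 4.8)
The plan is to establish the two assertions separately: first that the standard Gaussian $N(0,\id_{E_0})$ turns \eqref{eq:AJNinequality} into the equality $0=0$, and then that $C_g(\mathbf{c},\mathbf{d},\mathbf{B})\le 0$, so that this equality is in fact an extremal one. For the first point I would take $X_i\sim N(0,\id_{E_i})$ independent, so $X\sim N(0,\id_{E_0})$, and note $h(X_i)=\tfrac12\dim(E_i)\log(2\pi e)$ while, by hypothesis (i), $\Cov(B_jX)=B_jB_j^T=\id_{E^j}$, whence $h(B_jX)=\tfrac12\dim(E^j)\log(2\pi e)$. The left-hand side of \eqref{eq:AJNinequality} is then $\tfrac12\log(2\pi e)\bigl(\sum_i c_i\dim(E_i)-\sum_j d_j\dim(E^j)\bigr)$, which vanishes by the scaling condition \eqref{eq:ScalingCond} (available here via Remark \ref{rmk:scaling}). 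In particular $C_g(\mathbf{c},\mathbf{d},\mathbf{B})\ge 0$, and once $C_g\le 0$ is shown it follows both that $C_g=0$ and that $N(0,\id_{E_0})$ achieves equality.

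For the bound $C_g\le 0$, I would first reduce to a determinant inequality. Since differential entropy is translation invariant it suffices to bound the LHS of \eqref{eq:AJNinequality} over centered independent Gaussians $X_i\sim N(0,\Sigma_i)$, and we may assume each $\Sigma_i\in\pd(E_i)$ since otherwise $h(X_i)=-\infty$ and there is nothing to prove. Writing $\Sigma=\diag(\Sigma_1,\dots,\Sigma_k)\in\pd(E_0)$, so that $X\sim N(0,\Sigma)$ and $\Cov(B_jX)=B_j\Sigma B_j^T\in\pd(E^j)$, computing Gaussian entropies and cancelling the dimensional constants via \eqref{eq:ScalingCond} once more shows that $C_g\le 0$ is equivalent to
\begin{align}
\sum_{i=1}^k c_i\log\det(\Sigma_i)\ \le\ \sum_{j=1}^m d_j\log\det\bigl(B_j\Sigma B_j^T\bigr)\qquad\text{for all }\Sigma=\diag(\Sigma_1,\dots,\Sigma_k)\in\pd(E_0).\notag
\end{align}

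The crux is proving this determinant inequality, and here the key observation is that hypothesis (i) says precisely that $B_j^T:E^j\to E_0$ is an isometry ($(B_j^T)^TB_j^T=B_jB_j^T=\id_{E^j}$), so $B_j\Sigma B_j^T=(B_j^T)^T\Sigma B_j^T$ is a compression of $\Sigma$. Since $\log$ is operator concave on $(0,\infty)$, the operator Jensen inequality gives $\log\bigl(B_j\Sigma B_j^T\bigr)\succeq B_j(\log\Sigma)B_j^T$; taking traces and using $\log\det=\Tr\log$ together with cyclicity of the trace yields $\log\det\bigl(B_j\Sigma B_j^T\bigr)\ge\Tr\bigl((\log\Sigma)\,B_j^TB_j\bigr)$. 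Summing over $j$ with the positive weights $d_j$, and using that $\log\Sigma=\diag(\log\Sigma_1,\dots,\log\Sigma_k)$ is block diagonal so that $\Tr\bigl((\log\Sigma)M\bigr)=\sum_i\Tr\bigl((\log\Sigma_i)\,\pi_{E_i}M\pi_{E_i}^T\bigr)$ for every $M\in\sym(E_0)$, I would apply this to $M=\sum_j d_j B_j^TB_j$ and invoke the operator identity \eqref{eq:AJNGeometricOperatorIneq} to get $\sum_j d_j\Tr\bigl((\log\Sigma)B_j^TB_j\bigr)=\sum_i\Tr\bigl((\log\Sigma_i)\,c_i\id_{E_i}\bigr)=\sum_i c_i\log\det(\Sigma_i)$. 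Chaining these estimates gives exactly the displayed inequality, hence $C_g\le 0$, completing the proof.

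The only non-elementary ingredient is the operator Jensen step — operator concavity of $\log$ applied to the isometry $B_j^T$ — and I expect that to be the main thing to state carefully; everything else is bookkeeping with Gaussian entropies and the two defining identities of AJN-geometricity, the point being that condition (i) converts $B_j\Sigma B_j^T$ into a compression and condition (ii) lets the resulting trace telescope back to $\sum_i c_i\log\det\Sigma_i$. (An alternative that avoids operator Jensen is to minimize $\Sigma\mapsto\sum_j d_j\log\det(B_j\Sigma B_j^T)-\sum_i c_i\log\det\Sigma_i$ over block-diagonal $\Sigma\in\pd(E_0)$: in the coordinates $\Sigma_i=e^{A_i}$ this is a convex function of $(A_i)_i\in\prod_i\sym(E_i)$, and \eqref{eq:AJNGeometricOperatorIneq} is exactly the vanishing of its gradient at $\Sigma=\id_{E_0}$, so its minimum, namely $0$, is attained there.)
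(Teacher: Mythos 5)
Your proof is correct, but it takes a genuinely different route from the paper. The paper proves the stronger pointwise statement \eqref{hIneqAJNGeom} for \emph{arbitrary} distributions via a stochastic argument: it represents each $D(\mu_i\|\gamma_{E_i})$ as the energy of a F\"ollmer drift, uses \eqref{eq:AJNGeometricOperatorIneq} together with independence of the drift components to convert $\sum_i c_i\EE|\pi_{E_i}(u_s)|^2$ into $\sum_j d_j\EE|B_ju_s|^2$, and then lower-bounds the latter by $\sum_j d_j D(B_j\sharp\mu\|\gamma_{E^j})$ via Girsanov (Proposition \ref{prop:FollmerLowerBound}). You instead observe that, since $C_g$ is \emph{defined} as a supremum over Gaussians, it suffices to verify a finite-dimensional determinant inequality, which you obtain from the Hansen--Pedersen operator Jensen inequality applied to the isometry $B_j^T$ (condition (i)) followed by the trace identity coming from condition (ii). Both arguments are sound; yours is shorter and purely linear-algebraic, while the paper's is self-contained at the level of the entropy inequality itself (it does not lean on the definition of $C_g$ as a Gaussian supremum, i.e.\ on Theorem \ref{thm:AJNinequality}) and, more importantly, sets up the drift/$\Gamma_t$ machinery that is reused verbatim in the necessity part of Theorem \ref{thm:AJNrigidity} --- which is presumably why the authors chose it. Two small remarks: your computation that the standard Gaussian gives LHS $=0$ via Remark \ref{rmk:scaling} matches the paper's closing observation exactly; and your parenthetical alternative asserts convexity of $(A_i)_i\mapsto\sum_j d_j\log\det(B_je^AB_j^T)-\sum_i c_i\Tr(A_i)$ without justification --- this is true but is itself a nontrivial fact, so the operator Jensen route should be regarded as the actual proof.
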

\begin{proof}
We'll use the properties of the F\"ollmer drift summarized in Appendix \ref{sec:FollmersDrift}. 
Begin by fixing  {centered} $\mu_i\in \mathcal{P}(E_i)$, $1\leq i \leq k$, and let $(W_t)_{t\geq 0}$  be a Brownian motion on $E_0$ with $\operatorname{Cov}(W_1) = \id_{E_0}$.  By Theorem \ref{thm:existsDrifts} and \eqref{eq:FollmerMean}, there is a drift $U_t = \int_0^t u_s ds$  such that $\EE[u_t]=0$ and $(\pi_{E_i}(u_t))_{i=1}^k$ are independent for all $0\leq t\leq 1$,
\begin{align}
  (W_1 + U_1) \sim \mu_1 \otimes \cdots \otimes \mu_k, \label{rightTargetDist}
\end{align}
and $D(\mu_i\|\gamma_{E_i}) = \frac{1}{2} \int_{0}^1\EE | \pi_{E_i}(u_s) |^2 ds$ for each $1\leq i\leq k$. Therefore, 
\begin{align}
\sum_{i=1}^kc_i D(\mu_i \|\gamma_{E_i}) 
&= \frac{1}{2} \EE  \int_0^1 \sum_{i=1}^k c_i    | \pi_{E_i}(u_s) |^2 ds \notag\\
&=   \frac{1}{2} \EE  \int_0^1  \sum_{j=1}^m d_j  |B_j   {u}_s|^2 ds   \label{eq:useOpIn2}   \\
&\geq  \sum_{j=1}^md_j D(B_j \sharp ( \mu_1 \otimes \cdots \otimes \mu_k) \|\gamma_{E^j})   \label{eq:useFollmer2},
\end{align}
where   \eqref{eq:useOpIn2} follows from \eqref{eq:AJNGeometricOperatorIneq} and the properties of $u_t$, and \eqref{eq:useFollmer2} follows from \eqref{rightTargetDist} and  Proposition \ref{prop:FollmerLowerBound} (with construction \eqref{eq:bridgeConstruction}) because  $B_j W_1 \sim \gamma_{E^j}$, due to $B_j B_j^T = \id_{E^j}$ by assumption.  Now, expanding the relative entropies in terms of Shannon entropies and second moments, the second-moment terms cancel due to independence and \eqref{eq:AJNGeometricOperatorIneq}, giving
\begin{align}
\sum_{i=1}^kc_i   h(X_i)   \leq  \sum_{j=1}^m d_j   h(B_j X)   
\label{hIneqAJNGeom}
\end{align}
for any $X_i\sim \mu_i \in \mathcal{P}(E_i)$ and $X\sim \otimes_{i=1}^k \mu_i$, where the centering assumption can  be removed due to translation invariance of Shannon entropy.  The fact that $X\sim \gamma_{E_0}$ is an extremizer follows immediately  from the scaling condition \eqref{eq:ScalingCond} (see Remark \ref{rmk:scaling}) and  the observation that $B_j X\sim \gamma_{E^j}$ (since  $B_j  B_j^T = \id_{E^j}$).
\end{proof}
\begin{remark}
In the case where the datum is such that \eqref{eq:AJNinequality} coincides with the Shannon--Stam inequality, the above proof reduces to that of Lehec \cite{lehec2013representation}.  The new idea  is identifying and incorporating the ``correct" definition  of AJN-geometricity.  When $k=1$, the AJN inequality \eqref{eq:AJNinequality} coincides with the entropic form of the Brascamp--Lieb inequalities, and the definition of AJN-geometricity reduces to the the definition of geometricity for Brascamp--Lieb data found in \cite{bennett2008brascamp}.
\end{remark}

AJN-geometric data have a relatively straightforward geometric interpretation.  In particular, first note that each $E_i$ has a natural isometric embedding into $E_0$ via the inclusion $\pi^T_{E_i}  : E_i \to E_0$. If $(\mathbf{c},\mathbf{d},\mathbf{B})$ is AJN-geometric then $B_j  B_j^T = \id_{E^j}$, which means that each $E^j$ can be isometrically embedded into $E_0$ by the map $B_j^T : E^j \to E_0$.    In this way, we can consider $(E_i)_{i=1}^k$ and $(E^j)_{j=1}^m$ to be subspaces of $E_0$, and $\Pi_{E_i}  := \pi^T_{E_i}\pi_{E_i}$ and $\Pi_{E^j}  := B_j^T B_j$ define the orthogonal projections of $E_0$ onto $E_i$ and $E^j$, respectively.  Thus, the geometric instances of the AJN inequality \eqref{eq:AJNinequality} can be restated in a way that dispenses with the specific linear maps $\mathbf{B}$ as follows.
\begin{corollary}\label{cor:AJNgeoAlt}
Let $E^1,\dots, E^m$ be  subspaces of $E_0 = \oplus_{i=1}^k E_i$. If $\mathbf{c}$ and   $\mathbf{d}$ satisfy
 \begin{align}
\sum_{j=1}^m d_j \Pi_{E_i} \Pi_{E^j}  \Pi_{E_i}  = c_i \Pi_{E_i} , \hspace{5mm}\mbox{for each~}1\leq i\leq k,  \label{eq:AJNGeometricOperatorIneq2}
\end{align}
then for any independent $X_i \in \mathcal{P}(E_i)$, $1\leq i \leq k$, and $X = (X_1,\dots, X_m)$, 
\begin{align} 
\sum_{i=1}^k c_i h( \Pi_{E_i} X) \leq \sum_{j=1}^m d_j h( \Pi_{E^j} X). \label{eq:AJNinequalityGeometricStatement}
\end{align}
Equality is achieved for $X\sim N(0,\id_{E_0})$.
\end{corollary}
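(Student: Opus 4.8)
The plan is to recognize Corollary~\ref{cor:AJNgeoAlt} as a direct translation of Proposition~\ref{prop:AJNGeomCgEquals0} into coordinate-free ``subspace'' language. First I would define, for each $j$, the linear map $B_j : E_0 \to E^j$ to be the orthogonal projection of $E_0$ onto the subspace $E^j$, with codomain taken to be $E^j$ itself. Its adjoint $B_j^T : E^j \to E_0$ is then the isometric inclusion of $E^j$ into $E_0$, so that $B_j B_j^T = \id_{E^j}$ --- which is exactly condition (i) of AJN-geometricity --- and $B_j^T B_j = \Pi_{E^j}$, the orthogonal projection of $E_0$ onto $E^j$ regarded as an operator on $E_0$.

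Next I would check that the hypothesis~\eqref{eq:AJNGeometricOperatorIneq2} is equivalent to condition (ii), i.e.\ \eqref{eq:AJNGeometricOperatorIneq}, for the datum $(\mathbf{c},\mathbf{d},\mathbf{B})$ just constructed. Using $\Pi_{E_i} = \pi_{E_i}^T \pi_{E_i}$ and $\pi_{E_i}\pi_{E_i}^T = \id_{E_i}$, conjugating~\eqref{eq:AJNGeometricOperatorIneq2} by $\pi_{E_i}$ on the left and $\pi_{E_i}^T$ on the right collapses it to $\sum_j d_j \pi_{E_i} B_j^T B_j \pi_{E_i}^T = c_i \id_{E_i}$, and conversely conjugating~\eqref{eq:AJNGeometricOperatorIneq} by $\pi_{E_i}^T$ and $\pi_{E_i}$ recovers~\eqref{eq:AJNGeometricOperatorIneq2}; both directions are one-line manipulations. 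Hence $(\mathbf{c},\mathbf{d},\mathbf{B})$ is AJN-geometric, and Proposition~\ref{prop:AJNGeomCgEquals0} gives $\sum_i c_i h(X_i) \le \sum_j d_j h(B_j X)$ for any independent $X_i \in \mathcal{P}(E_i)$, with $X\sim N(0,\id_{E_0})$ an extremizer.

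The remaining, purely bookkeeping step is to identify the entropies in this inequality with those in~\eqref{eq:AJNinequalityGeometricStatement}. Since $\pi_{E_i}^T : E_i \to E_0$ is an isometry onto the subspace $E_i$ and $\Pi_{E_i}X = \pi_{E_i}^T X_i$, the law of $\Pi_{E_i}X$ on $E_i$ (with its induced Euclidean/Lebesgue structure) coincides with that of $X_i$, so $h(\Pi_{E_i}X) = h(X_i)$; likewise $\Pi_{E^j}X = B_j^T(B_j X)$ with $B_j^T$ an isometry onto $E^j$, so $h(\Pi_{E^j}X) = h(B_j X)$. This yields~\eqref{eq:AJNinequalityGeometricStatement}. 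For the equality case, when $X\sim N(0,\id_{E_0})$ the orthogonal projections $\Pi_{E_i}X$ and $\Pi_{E^j}X$ are standard Gaussians on $E_i$ and $E^j$ respectively, so both sides of~\eqref{eq:AJNinequalityGeometricStatement} are equal multiples of $\sum_i c_i \dim(E_i) = \sum_j d_j \dim(E^j)$; this equality of dimensions is the scaling condition~\eqref{eq:ScalingCond}, which holds by Remark~\ref{rmk:scaling}.

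The only point requiring care --- the closest thing to an obstacle --- is this entropy-identification step: one must be explicit that $h(\Pi_{E_i}X)$ in the statement denotes the entropy of $\Pi_{E_i}X$ computed on the subspace $E_i$, where it is absolutely continuous, and not on all of $E_0$, where it is supported on a proper subspace and hence would have entropy $-\infty$ under the paper's convention. Once that reading is fixed, everything reduces to isometry invariance of differential entropy together with the already-established Proposition~\ref{prop:AJNGeomCgEquals0}.
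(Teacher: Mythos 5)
Your proposal is correct and is essentially the paper's intended argument: the corollary is stated as an immediate consequence of Proposition~\ref{prop:AJNGeomCgEquals0} via exactly the dictionary you set up (take $B_j$ to be the orthogonal projection onto $E^j$ with codomain $E^j$, so that $B_jB_j^T=\id_{E^j}$ and $B_j^TB_j=\Pi_{E^j}$), and your care about computing $h(\Pi_{E_i}X)$ and $h(\Pi_{E^j}X)$ on the subspaces themselves matches the remark following the corollary.
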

\begin{remark} Entropies in \eqref{eq:AJNinequalityGeometricStatement} are computed with respect to Lebesgue measure on the subspace being projected upon.  In particular, we have $h( \Pi_{E_i} X) =h(X_i)$, but have chosen to write \eqref{eq:AJNinequalityGeometricStatement} in a way to emphasize the symmetry of the inequality.
\end{remark}

With the above definitions in hand, the following completely characterizes the (Gaussian-)extremizable instances of Theorem \ref{thm:AJNinequality}.  It is the main result of this section, and specializes to  the extremizability results  in \cite{bennett2008brascamp} for the Brascamp--Lieb functional inequalities when $k=1$. 
\begin{theorem}\label{thm:extremizableIFFgeometricAJN}
The following are equivalent:
\begin{enumerate}[(i)]
\item $(\mathbf{c},\mathbf{d},\mathbf{B})$ is  extremizable.
\item $(\mathbf{c},\mathbf{d},\mathbf{B})$ is Gaussian-extremizable.
\item There  are $K_i \in \mathbf{S}^+(E_i)$, $1\leq i \leq k$,   satisfying
\begin{align}
\sum_{j=1}^m d_j  \pi_{E_i}  B^T_j (B_j K B_j^T)^{-1} B_j \pi^T_{E_i}=  c_i K_i^{-1}, \hspace{5mm}1\leq i\leq k,\label{eq:AJNExtremizabilityCond}
\end{align}
where $K := \operatorname{diag}(K_1 , \dots, K_k )$.
\item $(\mathbf{c},\mathbf{d},\mathbf{B})$ is equivalent to an AJN-geometric datum.
\end{enumerate}
\end{theorem}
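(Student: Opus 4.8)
The plan is to establish the five implications (ii)$\,\Rightarrow\,$(i), (iv)$\,\Rightarrow\,$(ii), (ii)$\,\Rightarrow\,$(iii), (iii)$\,\Rightarrow\,$(iv) and (i)$\,\Rightarrow\,$(ii), which together yield the stated equivalences. Throughout, write $\mathcal{F}(X):=\sum_{i=1}^k c_i h(X_i)-\sum_{j=1}^m d_j h(B_jX)$, so that \eqref{eq:AJNinequality} reads $\mathcal{F}(X)\le C_g(\mathbf{c},\mathbf{d},\mathbf{B})$. The implication (ii)$\,\Rightarrow\,$(i) is immediate, since independent Gaussian vectors lie in $\prod_i\mathcal{P}(E_i)$. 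For (iv)$\,\Rightarrow\,$(ii): if $B'_j=A_j^{-1}B_jC^{-1}$ with $C=\diag(C_1,\dots,C_k)$, then for any input $X$ one has $\mathcal{F}'(CX)=\mathcal{F}(X)+\kappa$, where $\mathcal{F}'$ is the functional attached to $(\mathbf{c},\mathbf{d},\mathbf{B}')$ and $\kappa=\kappa(\{C_i\},\{A_j\})$ is a constant that likewise relates $C_g(\mathbf{c},\mathbf{d},\mathbf{B}')$ to $C_g(\mathbf{c},\mathbf{d},\mathbf{B})$; hence extremizers correspond under the invertible, Gaussianity-preserving map $X\mapsto CX$, and equivalent data are simultaneously Gaussian-extremizable. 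Since Proposition \ref{prop:AJNGeomCgEquals0} says an AJN-geometric datum has $C_g=0$ and is extremized by $N(0,\id_{E_0})$, this gives (iv)$\,\Rightarrow\,$(ii).

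Next, (iii)$\,\Rightarrow\,$(iv) is a direct change of variables. Given $K_i\in\pd(E_i)$ solving \eqref{eq:AJNExtremizabilityCond} (so each $B_jKB_j^T$ is invertible), put $C_i:=K_i^{-1/2}$, $C:=\diag(C_1,\dots,C_k)=K^{-1/2}$, $A_j:=(B_jKB_j^T)^{1/2}$, and $B'_j:=A_j^{-1}B_jC^{-1}$. Then $B'_j(B'_j)^T=(B_jKB_j^T)^{-1/2}(B_jKB_j^T)(B_jKB_j^T)^{-1/2}=\id_{E^j}$, and, using $\pi_{E_i}K^{1/2}=K_i^{1/2}\pi_{E_i}$ and \eqref{eq:AJNExtremizabilityCond},
$$\sum_{j=1}^m d_j\,\pi_{E_i}(B'_j)^T B'_j\pi_{E_i}^T=K_i^{1/2}\Big(\sum_{j=1}^m d_j\,\pi_{E_i}B_j^T(B_jKB_j^T)^{-1}B_j\pi_{E_i}^T\Big)K_i^{1/2}=K_i^{1/2}\big(c_iK_i^{-1}\big)K_i^{1/2}=c_i\id_{E_i},$$
so $(\mathbf{c},\mathbf{d},\mathbf{B}')$ is AJN-geometric. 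For (ii)$\,\Rightarrow\,$(iii), take (by translation invariance) the Gaussian extremizer to be $X_i\sim N(0,K_i)$ with $K_i\in\pd(E_i)$, nondegeneracy being forced by finiteness of $C_g$; writing $K=\diag(K_1,\dots,K_k)$, the quantity $\mathcal{F}(X)$ equals $\tfrac12\big(\sum_i c_i\log\det K_i-\sum_j d_j\log\det(B_jKB_j^T)\big)$ up to an additive constant that vanishes by the scaling condition \eqref{eq:ScalingCond}. Thus $K$ maximizes this smooth function over the open cone $\prod_i\pd(E_i)$; setting its $K_i$-block gradient to zero and using $\partial_S\log\det S=S^{-1}$ and $\partial_S\log\det(B_jSB_j^T)=B_j^T(B_jSB_j^T)^{-1}B_j$ produces \eqref{eq:AJNExtremizabilityCond} verbatim.

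The substantive step is (i)$\,\Rightarrow\,$(ii). Let $(X_i)$ be independent extremizers, assumed centered by translation invariance, and let $\hat{X}_i\sim N(0,\Cov(X_i))$ be independent; for a centered $\nu$ write $\hat\nu:=N(0,\Cov\nu)$, so that $h(\hat\nu)-h(\nu)=D(\nu\,\|\,\hat\nu)$. Since $\Cov(B_j\hat{X})=B_j\,\diag(\Cov X_i)\,B_j^T=\Cov(B_jX)$ and likewise for the marginals, $B_j\hat{X}=\widehat{B_jX}$ and
$$\mathcal{F}(\hat{X})-\mathcal{F}(X)=\sum_{i=1}^k c_i\,D\big(X_i\,\big\|\,\hat{X}_i\big)-\sum_{j=1}^m d_j\,D\big(B_jX\,\big\|\,\widehat{B_jX}\big).$$
So $(\hat{X}_i)$ is again an extremizer precisely when the right-hand side is $\ge0$. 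The plan is to prove the covariance-matched refinement of the AJN inequality, namely $\sum_j d_j\,D(B_jX\|\widehat{B_jX})\le\sum_i c_i\,D(X_i\|\hat{X}_i)$ for every independent tuple $X$, equivalently $\mathcal{F}(X)\le\mathcal{F}(\hat{X})$: among inputs with prescribed marginal covariances, the Gaussian one is worst. Granting this, $\mathcal{F}(\hat{X})\ge\mathcal{F}(X)=C_g$, while $\mathcal{F}(\hat{X})\le C_g$ by Theorem \ref{thm:AJNinequality}, so $\hat{X}$ is a Gaussian extremizer and (ii) follows. I would prove the refinement by re-running the F\"ollmer-drift argument of Proposition \ref{prop:AJNGeomCgEquals0} relative to a Brownian motion with $\Cov(W_1)=\diag(\Cov X_i)$ rather than $\id_{E_0}$, so that the reference measures on the factors and on the images $B_jW_1$ are exactly the matched Gaussians, and carrying a general positive-definite covariance through the computation; alternatively one may extract it from the equality analysis of the AJN inequality, bounding the defect via the Eldan--Mikulincer stability estimates for the Shannon--Stam inequality invoked in Section \ref{sec:Extremizers}.

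I expect the main obstacle to be exactly this covariance-matched refinement. The cancellation exploited in Proposition \ref{prop:AJNGeomCgEquals0} rested on the operator identity \eqref{eq:AJNGeometricOperatorIneq}, which is unavailable for a general datum; one must therefore either carry an arbitrary $\Cov(W_1)$ through the drift computation and verify that the cross terms still cancel using only independence of the components $\pi_{E_i}(u_s)$ (rather than geometricity), or first reduce to AJN-geometric data---via the critical-subspace structure theory behind the dimension condition \eqref{eq:DimCond}---and transport the statement back along an equivalence. A secondary, routine matter is checking the nondegeneracy and regularity needed to differentiate the Gaussian objective in (ii)$\,\Rightarrow\,$(iii) and to justify the relative-entropy identities above, all of which follow from $X_i\in\mathcal{P}(E_i)$.
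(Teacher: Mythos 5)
Four of your five implications --- (ii)$\Rightarrow$(i), (ii)$\Rightarrow$(iii) via first-order optimality of the log-determinant functional, (iii)$\Rightarrow$(iv) via $B_j'=(B_jKB_j^T)^{-1/2}B_jK^{1/2}$, and (iv)$\Rightarrow$(ii) via change of variables --- coincide with the paper's argument and are correct. The gap is in (i)$\Rightarrow$(ii). The ``covariance-matched refinement'' $\mathcal{F}(X)\le\mathcal{F}(\hat X)$ that you propose to prove for \emph{every} admissible input is false, so no amount of re-running the F\"ollmer-drift computation with a different reference covariance can establish it. Concretely, take the subadditivity-of-entropy datum ($k=1$, $E_1=E_0=\mathbb{R}^2$, $c_1=1$, $B_1,B_2$ the coordinate projections, $d_1=d_2=1$), which is AJN-geometric, so no reduction can dodge it; there $\mathcal{F}(X)=-I(X^{(1)};X^{(2)})$ and your claim reads $I(X^{(1)};X^{(2)})\ge I(\hat X^{(1)};\hat X^{(2)})$, i.e.\ that the Gaussian minimizes mutual information for fixed covariance. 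This fails: with $B=\pm1$ a fair coin and $X^{(1)}=B+\epsilon N_1$, $X^{(2)}=B+\epsilon N_2$ (independent Gaussian noises), one has $I(X^{(1)};X^{(2)})\le\log 2$ while the correlation tends to $1$ and the matched Gaussian mutual information $-\tfrac12\log(1-\rho^2)$ blows up. Note also that your fallback --- first reducing to geometric data via the critical-subspace theory --- is circular here, since knowing the datum is equivalent to a geometric one is precisely conclusion (iv), downstream of the implication you are trying to prove; and, as the example shows, the refinement fails even for geometric data.

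What is true, and what the paper proves, is the much weaker statement that $\hat X$ is extremal \emph{when $X$ is}, and the mechanism is entirely different: extremizers are closed under independent convolution (Proposition \ref{prop:AJNextremizersClosedUnderConvolutions}, via the doubling/rotation argument), so the normalized i.i.d.\ sums $Z_n=n^{-1/2}\sum_{\ell=1}^n X^{(\ell)}$ are extremal for every $n$, and the entropic central limit theorem forces all entropies appearing in $\mathcal{F}(Z_n)$ to converge to those of $N(0,\ocov(X))$, which is therefore extremal. If you want to salvage your outline, replace the covariance-matched refinement by this convolution-plus-CLT argument; the rest of your proof then goes through.
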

\begin{remark}\label{rmk:Kstructure}
For $(K_i)_{i=1}^k$ satisfying \eqref{eq:AJNExtremizabilityCond}, the Gaussians $X_i \sim N(0,K_i)$, $1\leq i \leq k$  are extremal in \eqref{eq:AJNinequality}.  In fact, the proof of Theorem \ref{thm:extremizableIFFgeometricAJN} will show that   if $X_i \in \mathcal{P}(E_i)$, $1\leq i \leq k$ are extremal in \eqref{eq:AJNinequality}, then the covariances $K_i = \ocov(X_i)$ necessarily satisfy \eqref{eq:AJNExtremizabilityCond}.
\end{remark}

As a preliminary observation, we note that the extremizers in  \eqref{eq:AJNinequality} are closed under convolutions. This fact can be extracted from the doubling argument in \cite{anantharam2019unifying}; we state and prove it here for completeness. 
\begin{proposition}\label{prop:AJNextremizersClosedUnderConvolutions}
Fix a datum $(\mathbf{c},\mathbf{d},\mathbf{B})$ that is  extremizable for the AJN inequality \eqref{eq:AJNinequality}.  Let $X=(X_1, \dots, X_k)$ and $Y=(Y_1, \dots, Y_k)$  each satisfy \eqref{eq:AJNinequality} with equality.  If $X,Y$ are independent, then $X+Y = (X_1+Y_1, \dots, X_k+Y_k)$ also satisfies \eqref{eq:AJNinequality} with equality. 
\end{proposition}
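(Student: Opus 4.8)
The plan is to deduce the statement from the AJN inequality (Theorem~\ref{thm:AJNinequality}) applied both globally and conditionally, after a rotation of the pair $(X,Y)$ --- in spirit, the doubling argument referred to above. Write $\Phi(Z) := \sum_{i=1}^{k} c_i h(Z_i) - \sum_{j=1}^{m} d_j h(B_j Z)$ for the deficit, so \eqref{eq:AJNinequality} reads $\Phi(Z)\le C_g$ with $C_g := C_g(\mathbf{c},\mathbf{d},\mathbf{B})$, and the hypothesis is $\Phi(X)=\Phi(Y)=C_g$. First I would record that $\Phi$ is invariant under global dilations $Z\mapsto\alpha Z$, $\alpha>0$: since extremizability forces the scaling condition \eqref{eq:ScalingCond}, such a dilation changes $\Phi$ by $\log\alpha\,\big(\sum_i c_i\dim(E_i)-\sum_j d_j\dim(E^j)\big)=0$. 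Hence, putting $X' := \tfrac{1}{\sqrt{2}}(X+Y)$ and $Y' := \tfrac{1}{\sqrt{2}}(X-Y)$ with components $X_i',Y_i'$, it suffices to show $\Phi(X')=C_g$, because $\Phi(X+Y)=\Phi(\sqrt{2}\,X')=\Phi(X')$; moreover the $X_i'$ are independent and belong to $\mathcal P(E_i)$, so $\Phi(X')\le C_g$ is automatic and only the reverse inequality is at issue.

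The engine of the argument is the observation that $(x,y)\mapsto\big(\tfrac{x+y}{\sqrt{2}},\tfrac{x-y}{\sqrt{2}}\big)$ is orthogonal on each $E_i\oplus E_i$ and on each $E^j\oplus E^j$. Combined with $X\perp Y$ (so $X_i\perp Y_i$ and $B_jX\perp B_jY$, while $B_jX'=\tfrac{1}{\sqrt{2}}(B_jX+B_jY)$ and $B_jY'=\tfrac{1}{\sqrt{2}}(B_jX-B_jY)$), this gives $h(X_i',Y_i')=h(X_i,Y_i)=h(X_i)+h(Y_i)$ and $h(B_jX',B_jY')=h(B_jX,B_jY)=h(B_jX)+h(B_jY)$, whence $\sum_i c_i h(X_i',Y_i') - \sum_j d_j h(B_jX',B_jY') = \Phi(X)+\Phi(Y) = 2C_g$. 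Now I would expand by the chain rule, $h(X_i',Y_i')=h(X_i')+h(Y_i'\mid X_i')$ and $h(B_jX',B_jY')=h(B_jX')+h(B_jY'\mid B_jX')$, and use that the pairs $(X_i',Y_i')$ are mutually independent --- so that $h(Y_i'\mid X_i')=h(Y_i'\mid X')$ --- together with $\sigma(B_jX')\subseteq\sigma(X')$, which gives $h(B_jY'\mid B_jX')\ge h(B_jY'\mid X')$. Substituting, $2C_g \le \Phi(X') + \big(\sum_i c_i h(Y_i'\mid X') - \sum_j d_j h(B_jY'\mid X')\big)$.

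To finish, I would invoke Theorem~\ref{thm:AJNinequality} conditionally on $X'$: for a.e.\ value $x'$ of $X'$ the components $Y_i'$ are conditionally independent (again since the $(X_i',Y_i')$ are independent across $i$), so $\sum_i c_i h(Y_i'\mid X'=x') - \sum_j d_j h(B_jY'\mid X'=x') \le C_g$; integrating in $x'$ yields $\sum_i c_i h(Y_i'\mid X') - \sum_j d_j h(B_jY'\mid X') \le C_g$, hence $2C_g\le\Phi(X')+C_g$, i.e.\ $\Phi(X')\ge C_g$. With $\Phi(X')\le C_g$ this gives $\Phi(X')=C_g$, which is the claim. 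The step I expect to demand the most care is this last one: one must check that the regular conditional laws $\mathcal L(Y_i'\mid X'=x')$ are genuine members of $\mathcal P(E_i)$ for a.e.\ $x'$ --- finite conditional second moments (immediate from $\EE|Y_i'|^2<\infty$) and finite conditional entropies --- so that the conditional AJN inequality applies and may be integrated. The latter finiteness follows from $h(Y_i'\mid X') = h(X_i)+h(Y_i)-h(X_i')$ being finite, using that $h(X_i)$ and $h(Y_i)$ are finite by hypothesis and $h(X_i')$ is finite because $X_i+Y_i$ has a finite second moment while $h(X_i+Y_i)\ge\max\{h(X_i),h(Y_i)\}>-\infty$.
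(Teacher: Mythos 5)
Your proposal is correct and follows essentially the same route as the paper: the same $45^\circ$ rotation $(X,Y)\mapsto(X',Y')$, the chain rule, the fact that conditioning reduces entropy, and a conditional application of the AJN inequality, concluding via the scaling condition. The only difference is presentational (you isolate the deficit $\Phi(X')$ rather than chaining inequalities to force equality throughout) plus a welcome extra check that the conditional laws lie in $\mathcal{P}(E_i)$ so the conditional AJN step is legitimate.
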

\begin{proof}
Define $Z^+ = (Z_1^+, \dots, Z_k^+)$ and $Z^- = (Z_1^-, \dots, Z_k^-)$, where
$$
Z_i^+ := \frac{1}{\sqrt{2}}(X_i + Y_i), \hspace{5mm} Z_i^- := \frac{1}{\sqrt{2}}(X_i - Y_i), \hspace{5mm}1\leq i \leq k.
$$
Observe that
\begin{align}
\sum_{i=1}^k c_i  (h(X_i)  + h(Y_i) ) 
&=  
\sum_{i=1}^k c_i  h(X_i, Y_i) \label{eq:ind1}\\
&=\sum_{i=1}^k c_i \left( h(Z_i^+) + h(Z_i^-|Z_i^+)\right) \label{eq:rotate1}\\
&\leq\sum_{j=1}^m d_j \left( h(B_j Z^+) + h(B_j Z^-|Z^+)\right) +  2C_g(\mathbf{c},\mathbf{d},\mathbf{B}) \label{eq:applyAJN1} \\
&\leq\sum_{j=1}^m d_j \left( h(B_j Z^+) + h(B_j Z^-|B_j Z^+)\right) +2C_g(\mathbf{c},\mathbf{d},\mathbf{B})\label{eq:condReduces1} \\
&=\sum_{j=1}^m d_j \left( h(B_j X, B_j Y)  \right)+2C_g(\mathbf{c},\mathbf{d},\mathbf{B}) \label{eq:orthRotation1} \\
&=\sum_{j=1}^m d_j  ( h(B_j X) +   h(B_j Y) )+2C_g(\mathbf{c},\mathbf{d},\mathbf{B}) \label{eq:ind2}.
\end{align}
In the above, \eqref{eq:ind1} is due to independence; \eqref{eq:rotate1} follows due to orthogonality of the transformation $(X_i,Y_i) \to (Z_i^+,Z_i^-)$ and the chain rule; \eqref{eq:applyAJN1} is two applications of \eqref{eq:AJNinequality}; \eqref{eq:condReduces1} follows because conditioning reduces entropy; \eqref{eq:orthRotation1} is due to the chain rule and orthogonality of the transformation $(B_j Z^+, B_jZ^-) \to (B_j X, B_j Y)$; \eqref{eq:ind2} is again due to independence. 

Since $X$ and $Y$ are extremal by assumption, we have equality throughout.  This implies $Z^+$ is also extremal, and hence we conclude $X+Y$ is extremal by the scaling condition \eqref{eq:ScalingCond}.  \end{proof}

\begin{proof}[Proof of Theorem \ref{thm:extremizableIFFgeometricAJN}] $(i)\Rightarrow(ii)$:  Let $X$ be an extremizer in \eqref{eq:AJNinequality}, and put $Z_n := n^{-1/2} \sum_{\ell=1}^n X^{(i)}$, where $X^{(1)},X^{(2)}, \dots$ are i.i.d.~copies of $X$, which we assume to be zero-mean without loss of generality.  By an application of Proposition \ref{prop:AJNextremizersClosedUnderConvolutions} and the scaling condition  \eqref{eq:ScalingCond} (which holds by finiteness of $ C_g(\mathbf{c},\mathbf{d},\mathbf{B})$), we have that $Z_n$ is an extremizer in   \eqref{eq:AJNinequality} for all $n\geq 1$.    By an application of  the entropic central limit theorem \cite{barronCLT, CarlenSoffer}, it follows that  $Z\sim N(0,\ocov(X))$ is also an extremizer in   \eqref{eq:AJNinequality}.

\smallskip

\noindent$(ii)\Rightarrow(i)$:  This follows immediately from Theorem \ref{thm:AJNinequality}. 

\smallskip

\noindent$(ii)\Rightarrow(iii)$: If $(\mathbf{c},\mathbf{d},\mathbf{B})$ is Gaussian-extremizable, then  there exist $K^*_i \in \mathbf{S}^+(E_i)$, $1\leq i \leq k$ which maximize
$$
(K_i)_{i=1}^k \mapsto \sum_{i=1}^k c_i \log \det(K_i) -  \sum_{j=1}^m d_j \log \det(B_j K B_j^T), 
$$
where $K:= \diag(K_1, \dots, K_k)$ (note  this implies $B_j K^* B_j^T$ is invertible for each $1\leq j \leq m$).  This means, for any index $i$ and any $A_i \in \sym(E_i)$,  we can consider the perturbation $K_i = K_i^* + \epsilon A_i$ for $\epsilon$ sufficiently small, and the function value cannot increase.  By first-order Taylor expansion, this implies
\begin{align*}
c_i \langle A_i, (K^*_i)^{-1} \rangle  &= \sum_{j=1}^m d_j \langle   B_j \pi_{E_i}^T A_i \pi_{E_i} B_j^T, (B_j K^* B_j^T)^{-1} \rangle  \\
&=   \Big\langle   A_i  , \sum_{j=1}^m d_j  \pi_{E_i}  B^T_j (B_j K^* B_j^T)^{-1} B_j \pi^T_{E_i} \Big\rangle , 
\end{align*}
where $ \langle  \cdot, \cdot \rangle $ is the Hilbert--Schmidt (trace) inner product. 
By arbitrariness of $A_i$, we conclude \eqref{eq:AJNExtremizabilityCond}.

\smallskip

\noindent$(iii)\Rightarrow(iv)$:  Let $K$ be as in \eqref{eq:AJNExtremizabilityCond}. The equivalent datum $(\mathbf{c},\mathbf{d},\mathbf{B'})$ defined by 
 $$
 B_j' = (B_j K B_j^T)^{-1/2}B_j K^{1/2} , ~~1\leq j\leq m
 $$
is AJN-geometric. Indeed, $B_j' B_j'^T = \id_{E^j}$ and \eqref{eq:AJNExtremizabilityCond} gives
 \begin{align*}
\sum_{j=1}^m d_j \pi_{E_i} B_j'^T  B_j' \pi^T_{E_i} = \sum_{j=1}^m d_j K_i^{1/2}\pi_{E_i} B_j  (B_j K B_j^T)^{-1}  B_j \pi^T_{E_i} K_i^{1/2} = c_i \id_{E_i}.
\end{align*}

\smallskip

\noindent$(iv)\Rightarrow(ii)$:  Let $(\mathbf{c},\mathbf{d},\mathbf{B'})$ be the geometric datum equivalent to $(\mathbf{c},\mathbf{d},\mathbf{B})$.  In the notation of \eqref{eq:equivalentDatum}, for any $X_i \in \mathcal{P}(E_i)$, $1\leq i \leq k$ and  $X = (X_1,\dots, X_k)$, we have by a change of variables 
\begin{align*} 
&\sum_{i=1}^k c_i h(X_i) - \sum_{j=1}^m d_j h(B_j X) \\
&=  \sum_{i=1}^k c_i h(C_i X_i)  - \sum_{i=1}^k c_i \log\det(C_i) - \sum_{j=1}^m d_j h(B_j' C X) -\sum_{j=1}^m d_j\log\det(A_j) \\
&= \sum_{i=1}^k c_i h(Y_i)  - \sum_{j=1}^m d_j h(B_j' Y) - \sum_{i=1}^k c_i \log\det(C_i) -\sum_{j=1}^m d_j\log\det(A_j),
\end{align*}
where we have defined $Y_i := C_i X_i$, and $Y = (Y_1, \dots, Y_k)$.  Since each $C_i$ is invertible, it is clear that $X$ is a (Gaussan-)extremizer for $(\mathbf{c},\mathbf{d},\mathbf{B})$ if and only if $Y$ is a (Gaussan-)extremizer for $(\mathbf{c},\mathbf{d},\mathbf{B'})$.  The latter is Gaussian-extremizable by the assumption of geometricity and Proposition \ref{prop:AJNGeomCgEquals0}, so the claim follows. 
\end{proof}

\begin{remark}\label{rmk:FRBLrelated}
We remark that Theorem \ref{thm:AJNinequality} can be derived as a limiting case of the  forward-reverse Brascamp--Lieb inequalities \cite{liu2018forward}; details can be found in \cite[Section 4]{CourtadeLiu21}.  There is a counterpart notion of geometricity for the forward-reverse Brascamp--Lieb inequalities, for which a result parallel to Theorem \ref{thm:extremizableIFFgeometricAJN} holds.  However, the notion of ``geometricity" in the context of \cite{CourtadeLiu21} does not easily pass through the aforementioned limit, so it seems the simplest proof of Theorem \ref{thm:extremizableIFFgeometricAJN} is a more direct one, as given here.
\end{remark}

\section{Characterization of extremizers} \label{sec:Extremizers}

The goal of this section is to give a complete characterization of the extremizers in \eqref{eq:AJNinequality}.  In view of Theorem \ref{thm:extremizableIFFgeometricAJN}, it suffices to consider geometric instances of the AJN inequality; indeed, the extremizers of any other extremizable instance of the AJN inequality will be linear transformations of the extremizers for an   equivalent AJN-geometric datum. 

Toward this end, let $(\mathbf{c},\mathbf{d},\mathbf{B})$ be AJN-geometric, and regard $(E_i)_{i=1}^k$ and $(E^j)_{j=1}^m$ as subspaces of $E_0$, as in the discussion preceding Corollary \ref{cor:AJNgeoAlt}.  We now extend definitions found in Valdimarsson \cite{valdimarsson08} to the present setting.  A nonzero subspace $K \subset E_0$ is said to be {\bf independent} if it can be written as
$$
K = E_i \cap \bigcap_{j=1}^m V_j,
$$
for some $i\in \{1,\dots, k\}$, and each $V_j$   equal to $E^j$ or ${E^j}^{\perp}$ (the latter equal to the orthogonal complement of $E^j$ in $E_0$). Each independent subspace  is contained in some $E_i$, and distinct independent subspaces are orthogonal by construction. So, if $K^i_1, \dots, K^i_{n_i}$ is an enumeration of independent subspaces of $E_i$, then we can uniquely decompose
\begin{align}
E_i = K^i_{0}\oplus K^i_1 \oplus \cdots \oplus K^i_{n_i}, \label{eq:decomposeEi}
\end{align}
where $K^i_{0}$ is defined to be the orthogonal complement of $\oplus_{\ell=1}^{n_i}K^i_{\ell}$ in $E_i$.  Now, we can uniquely define the {\bf dependent} subspace $K_{dep}$ as the product-form subspace
\begin{align}
K_{dep}:= \oplus_{i=1}^k K^i_{0}. \label{eq:defKdep}
\end{align}
\begin{proposition} \label{prop:criticalDecomp}
If $K_{dep}$ is nonzero, there is an orthogonal decomposition
\begin{align}
K_{dep} = \oplus_{\ell=1}^n K^\ell_{dep}, \label{maxCriticalDecomp}
\end{align}
where each $K^{\ell}_{dep}$ is critical   for the datum $(\mathbf{c},\mathbf{d},\mathbf{B})$.  
\end{proposition}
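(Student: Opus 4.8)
The plan is to show that $K_{dep}$ decomposes into critical subspaces by finding a \emph{minimal} (in the partial order by inclusion, among product-form subspaces) nonzero product-form subspace $T \subseteq K_{dep}$ that is itself critical, then iterating on the orthogonal complement of $T$ within $K_{dep}$. The key observation is that for AJN-geometric data, the dimension condition \eqref{eq:DimCond} holds with equality on $E_0$ (this is the scaling condition, which follows from Remark~\ref{rmk:scaling} since $\dim(\pi_{E_i}E_0)=\dim(\pi_{E_i}E_0\cap E_i)=\dim E_i$ and $\dim(B_jE_0) = \dim E^j$), so $E_0$ itself is critical. More generally, I would introduce the deficiency functional $\delta(T) := \sum_j d_j\dim(B_jT) - \sum_i c_i\dim(\pi_{E_i}T) \geq 0$ on product-form subspaces; $T$ is critical iff $\delta(T)=0$.

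First I would establish the crucial \emph{submodularity-type} inequality: for product-form subspaces $S,T$, both $S+T$ and $S\cap T$ are product-form (componentwise sum and intersection), and
\[
\delta(S) + \delta(T) \geq \delta(S+T) + \delta(S\cap T),
\]
which follows from the standard submodularity $\dim(US)+\dim(UT)\geq \dim(U(S+T)) + \dim(U(S\cap T))$ applied with $U = \pi_{E_i}$ and $U = B_j$ (using that $\pi_{E_i}$ and $B_j$ are linear maps, and that for product-form subspaces $\pi_{E_i}(S\cap T) = \pi_{E_i}S \cap \pi_{E_i}T$ since the intersection is taken coordinatewise — this last point needs care but is where product-form-ness is used). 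Consequently the family of critical product-form subspaces is closed under sums and intersections: if $\delta(S)=\delta(T)=0$ then by submodularity and nonnegativity $\delta(S+T)=\delta(S\cap T)=0$.

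Next I would show $K_{dep}$ is critical. Since $E_0$ is critical and each independent subspace $K^i_\ell$ is critical — indeed for $K=K^i_\ell\subseteq E_i$ we have $\pi_{E_i}K = K$ contributing $c_i\dim K$, while $B_j K = K$ if $V_j = E^j$ and $B_j K = 0$ if $V_j = {E^j}^\perp$, and $\sum_j d_j\dim(B_jK) = \dim(K)\sum_{j: V_j = E^j} d_j$; applying the AJN-geometric identity \eqref{eq:AJNGeometricOperatorIneq} restricted to $K$ shows this sum equals $c_i\dim K$, so $\delta(K^i_\ell)=0$ — the subspace $\bigoplus_{i,\ell\geq 1}K^i_\ell$ is a sum of critical subspaces, hence critical, and $K_{dep}$ is its orthogonal complement inside $E_0$. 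I'd then argue that the orthogonal complement of a critical subspace within a critical ambient product-form space is critical: writing $E_0 = T\oplus T^{\perp}$ with both product-form, additivity of $\dim\circ U$ under product-form orthogonal direct sums (which holds since $\pi_{E_i}$ preserves the orthogonal splitting $E_i = T_i\oplus T_i^\perp$ and $B_j$ acts as an isometry on each $E^j$-component, but one must check $B_j T$ and $B_j T^\perp$ intersect trivially — this follows from the operator identity, which forces $B_j\pi_{E_i}^T$ to behave compatibly with the orthogonal structure) gives $\delta(E_0) = \delta(T) + \delta(T^\perp)$, so both vanish.

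Finally, with $K_{dep}$ critical and nonzero, pick a minimal nonzero critical product-form subspace $K^1_{dep}\subseteq K_{dep}$ (exists by finite-dimensionality). Its orthogonal complement $K_{dep}\ominus K^1_{dep}$ is again product-form and, by the complement argument above applied with ambient space $K_{dep}$, critical; induct on dimension to produce \eqref{maxCriticalDecomp}. I expect the main obstacle to be the two ``compatibility with orthogonal structure'' claims — that $\dim\circ B_j$ is additive over product-form orthogonal direct sums, equivalently that $B_j$ does not collapse the orthogonal splitting — since unlike the projections $\pi_{E_i}$, the maps $B_j$ are not coordinatewise; this is exactly where the AJN-geometric identity \eqref{eq:AJNGeometricOperatorIneq} (not merely the dimension inequality) must be invoked, presumably by an argument showing that on any critical product-form subspace the relevant maps act like partial isometries with no kernel beyond what the coordinate projections already see.
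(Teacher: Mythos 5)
Your overall architecture is sound in outline, but it contains a genuine gap at exactly the step you flag, and the gap is load-bearing for your route. The submodularity inequality $\delta(S)+\delta(T)\geq\delta(S+T)+\delta(S\cap T)$ is correct and does give closure of critical subspaces under sums and intersections; the problem is the complement step. For product-form $T\subset E_0$ with orthogonal complement $T^{\perp}$, one only has $\dim(B_jE_0)\leq\dim(B_jT)+\dim(B_jT^{\perp})$, hence $\delta(E_0)\leq\delta(T)+\delta(T^{\perp})$ --- the inequality points the \emph{wrong} way, and $\delta(E_0)=\delta(T)=0$ yields only $\delta(T^{\perp})\geq 0$, which is just the dimension condition \eqref{eq:DimCond}. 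To conclude $\delta(T^{\perp})=0$ you need $B_jT\cap B_jT^{\perp}=\{0\}$, which is false for general product-form orthogonal complements and requires criticality of $T$ plus a real argument: the paper proves it (for the separate claim \eqref{eq:decompBjKdep}) by showing $\Tr(\Pi_{E^j}\Pi_{T})\leq\dim(\Pi_{E^j}T)$ holds with equality when $T$ is critical, whence $\Pi_{T}\Pi_{E^j}$ is a partial isometry on $\Pi_{E^j}T$. Your proposal names this as the obstacle and says it ``presumably'' follows from \eqref{eq:AJNGeometricOperatorIneq}, but does not supply the argument, so as written the proof does not close.

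The paper avoids all of this because Proposition \ref{prop:criticalDecomp} only asserts the \emph{existence} of some critical decomposition, and the trivial one ($n=1$, $K^1_{dep}=K_{dep}$) suffices; the whole content is that $K_{dep}$ itself is critical. For that, no general complement lemma is needed: since each independent subspace is, for every $j$, either contained in $E^j$ or orthogonal to it, the images $\Pi_{E^j}K^i_{\ell}$ and $\Pi_{E^j}K_{dep}$ are automatically mutually orthogonal inside $E^j$ (this is \eqref{eq:decompEj}), so dimensions add \emph{exactly} in $E^j$. Summing the scaling identity \eqref{eq:ScalingCond} over $j$ and comparing with the dimension condition applied separately to each $K^i_{\ell}$ and to $K_{dep}$ (inequalities \eqref{eq:indAreCritical} and \eqref{eq:depIsCritical}) forces equality everywhere, and criticality of $K_{dep}$ drops out. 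So the fix for your write-up is either to (a) prove only that $K_{dep}$ is critical via this counting argument and take the trivial decomposition, or (b) keep your finer decomposition into minimal critical pieces but actually prove the partial-isometry/additivity claim for critical subspaces; option (b) is genuinely more than the proposition demands, though it is essentially what the paper's Lemma \ref{lem:SubspaceFacts} establishes anyway for later use.
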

A decomposition of the form \eqref{maxCriticalDecomp} is said to be  a {\bf critical decomposition}; we remark that critical decompositions are not necessarily unique.   Together with Theorem \ref{thm:extremizableIFFgeometricAJN}, the following completely characterizes the extremizers in the AJN inequality \eqref{eq:AJNinequality}.  In the statement, we let $\Pi_V:E_0 \to E_0$ denote the orthogonal projection onto the indicated subspace $V$. 

\begin{theorem}\label{thm:AJNrigidity}
Let $(\mathbf{c},\mathbf{d},\mathbf{B})$ be AJN-geometric, and decompose each $E_i$ as in \eqref{eq:decomposeEi}.  Independent \ $X_i \sim \mathcal{P}(E_i)$, $1\leq i \leq k$ and $X=(X_1,\dots, X_k)$ satisfy  \eqref{eq:AJNinequality} with equality iff
\begin{enumerate}[(i)]
\item $\Pi_{K^i_0}(X), \dots, \Pi_{K^i_{n_i}}(X)$ are independent for each $1\leq i \leq k$; and 
\item   there is a critical decomposition $K_{dep} = \oplus_{\ell=1}^n K^\ell_{dep}$ such that $\Pi_{K^1_{dep}}(X)$, \dots, $\Pi_{K^n_{dep}}(X)$ are independent isotropic Gaussians on their respective subspaces.%
\end{enumerate}
\end{theorem}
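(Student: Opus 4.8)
The plan is to run the F\"ollmer-drift argument of Proposition \ref{prop:AJNGeomCgEquals0} again, but now track the equality conditions through every inequality used. Recall the chain of inequalities there was: (a) the variational/pathwise identity $\sum_i c_i D(\mu_i\|\gamma_{E_i}) = \tfrac12 \EE\int_0^1 \sum_i c_i|\pi_{E_i}u_s|^2\,ds$; (b) the operator identity \eqref{eq:AJNGeometricOperatorIneq}, which converts $\sum_i c_i|\pi_{E_i}u_s|^2$ into $\sum_j d_j|B_j u_s|^2$ \emph{exactly}; and (c) the F\"ollmer lower bound of Proposition \ref{prop:FollmerLowerBound}, $\tfrac12\EE\int_0^1|B_j u_s|^2\,ds \ge D(B_j\sharp(\otimes_i\mu_i)\|\gamma_{E^j})$. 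So equality in \eqref{eq:AJNinequality} holds iff equality holds in (c) for every $j$ with $d_j>0$ (the only inequality in the chain). The first step is therefore to extract, from the equality case of Proposition \ref{prop:FollmerLowerBound} (which we would need to state/recall from Appendix \ref{sec:FollmersDrift}), the structural consequence: equality in the F\"ollmer bound for the pushed-forward measure $B_j\sharp(\otimes_i\mu_i)$ forces the drift $u_s$, projected appropriately, to be the F\"ollmer drift of $B_j\sharp(\otimes_i\mu_i)$ itself; concretely this means $B_j u_s$ must be adapted to the filtration of $B_j W_s$ and equal to $\nabla\log P_{1-s}f_j(B_j W_s + B_j U_s)$ for the corresponding density $f_j$. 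This is the ``each $B_j$-marginal is its own F\"ollmer transport'' rigidity.

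The second step is to combine these $m$ rigidity constraints, one per output space, with the geometry of the independent/dependent decomposition. On an independent subspace $K = E_i\cap\bigcap_j V_j$, each $B_j$ either sees all of $K$ (if $V_j = E^j$) or none of it (if $V_j = {E^j}^\perp$), so the constraints coming from the various $j$ decouple on $K$; the upshot should be that $\Pi_K(X)$ is free (any distribution in $\mathcal P(K)$) and independent of the rest of $X_i$ — this is the content of (i). This is exactly the mechanism in the toy example \eqref{simpleAJN}, where the ``$X$'' part is unconstrained. On the dependent subspace $K_{dep}$, by contrast, the projections onto the various $E^j$ overlap nontrivially, and the Gaussian rigidity propagates: here I would argue, following the structure of Valdimarsson's optimizer analysis together with the stability/rigidity estimates of Eldan--Mikulincer, that the F\"ollmer drift $u_s$ restricted to $K_{dep}$ must be linear in the Brownian path, which forces $\Pi_{K_{dep}}(X)$ to be Gaussian. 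Then the block structure of \eqref{eq:AJNGeometricOperatorIneq} restricted to a critical block $K^\ell_{dep}$, combined with the requirement that all the inequalities are tight simultaneously, pins the covariance on each critical block to be a multiple of the identity (isotropic), and forces independence across blocks of a suitable critical decomposition — giving (ii). The converse direction (that any $X$ of the stated form achieves equality) is the easier half: one checks directly that for such $X$ the F\"ollmer drift decomposes blockwise, the independent blocks contribute a trivially tight inequality, and on each isotropic-Gaussian critical block the criticality identity $\sum_i c_i\dim(\pi_{E_i}K^\ell_{dep}) = \sum_j d_j\dim(B_j K^\ell_{dep})$ makes the relative-entropy bookkeeping balance exactly.

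I expect the main obstacle to be the dependent-subspace analysis in step two: showing that tightness of the F\"ollmer bounds \emph{simultaneously for all $j$} forces Gaussianity of $\Pi_{K_{dep}}(X)$, and moreover isotropy on each critical block. The difficulty is that no single $B_j$-constraint sees $K_{dep}$ cleanly — it is precisely the subspace on which the projections $\Pi_{E^j}$ genuinely overlap — so the Gaussianity must be squeezed out of the joint rigidity, presumably by a fixed-point or eigenvalue argument showing that the only covariance profile consistent with all the tight constraints and the operator identity \eqref{eq:AJNGeometricOperatorIneq2} is the isotropic one on each critical block. Proving Proposition \ref{prop:criticalDecomp} (existence of a critical decomposition of $K_{dep}$ into critical blocks) is a prerequisite and is itself a combinatorial/linear-algebraic lemma: one repeatedly extracts a minimal critical subspace, using the dimension condition \eqref{eq:DimCond} with equality on $K_{dep}$ to guarantee such a subspace exists, and checks the complement inside $K_{dep}$ remains ``dependent'' so the induction continues. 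A secondary technical point is handling non-absolute-continuity / infinite-entropy edge cases, but since we work within $\mathcal P(E_i)$ these should be benign.
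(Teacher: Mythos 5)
Your overall strategy (re-run the F\"ollmer-drift proof of Proposition \ref{prop:AJNGeomCgEquals0} while tracking equality cases, then feed the result into a Valdimarsson-style independent/dependent decomposition) is indeed the paper's announced strategy, and your sketch of sufficiency is essentially right. But there is a genuine gap at the heart of the necessity argument: you assert that the only inequality in the chain is the F\"ollmer lower bound (c), and that its equality case forces $B_j u_s$ to be ``the F\"ollmer drift of $B_j\sharp(\otimes_i\mu_i)$ itself, adapted to the filtration of $B_jW_s$'' --- but you give no mechanism for converting that non-pointwise statement into the factorization of the density needed for (i). The paper does not exploit the equality case of Proposition \ref{prop:FollmerLowerBound} directly. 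Instead it replaces the drift-energy chain by one involving the matrix-valued process $\Gamma_t(x) = (1-t)\nabla u_t(x) + \id_{E_0}$ (a martingale embedding with $\int_0^1\Gamma_t\,dW_t\sim\otimes_i\mu_i$), and the inequality that actually carries the rigidity is the deterministic matrix inequality of Lemma \ref{lem:AJNrigidMatrix},
\[
\sum_{i=1}^k c_i\Tr\bigl((A_i-\id_{E_i})^2\bigr)\;\ge\;\sum_{j=1}^m d_j\Tr\bigl(((B_jA^2B_j^T)^{1/2}-\id_{E^j})^2\bigr),
\]
whose equality case is the pointwise operator identity $(\id_{E_0}-\Pi_{E^j})A\Pi_{E^j}=0$. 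Combined with the Eldan--Mikulincer entropy upper bound (Lemma \ref{lem:EntUpperBoundGamma}) applied to suitably time-changed Brownian motions, equality in \eqref{eq:AJNinequality} forces $(\id_{E_0}-\Pi_{E^j})\Gamma_t(x)\Pi_{E^j}=0$ for all $x$, $t$, $j$, i.e.\ a block structure on $\nabla^2\log P_{1-t}f$; this is what yields the product form \eqref{eq:fFactors} and hence (i). Without this (or an equivalent device), your step one does not go through.

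Two further concrete points. First, for (ii) the paper's mechanism is Fourier-analytic, not a fixed-point argument: after reducing to the case of no independent subspaces, the factorizations force $\hat{u}_t$ to be supported on $(H_1 \cup \cdots \cup H_k)\cap\bigcap_{j}(H^j\cup (H^j)^{\perp})=\{0\}$, so $u_t$ is affine, $\Gamma_t$ is deterministic, and $X$ is Gaussian by the It\^o isometry; isotropy on each critical block then follows from the spectral decomposition of $\ocov(X)$ together with equality in the dimension condition \eqref{eq:DimCond}, which exhibits the eigenspaces as a critical decomposition. Second, the whole argument requires a preliminary regularization $\mu_i=\tilde{\mu}_i*\gamma_{E_i}$ (justified by closure of extremizers under convolution, Proposition \ref{prop:AJNextremizersClosedUnderConvolutions}) so that the growth and semi-convexity estimates of Lemma \ref{lem:GrowthEstimate} apply; your proposal omits this, and without it the differentiability of $u_t$, the positive definiteness of $\Gamma_t$, and the tempered-distribution step are unavailable.
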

In words, (i) says that each random vector $X_i$ splits into independent factors on the orthogonal decomposition of $E_i$ given by \eqref{eq:decomposeEi}.  Condition (ii) tells us that the factor of $X$ supported on $K_{dep}$ is Gaussian with  $\ocov(\Pi_{K_{dep}} (X)) = \sum_{\ell=1}^n \sigma_{\ell}^2 \Pi_{K^\ell_{dep}}$,   for some critical decomposition \eqref{maxCriticalDecomp} and choice of variances $(\sigma_{\ell}^2)_{\ell=1}^n$.  In effect, this links the covariances of the Gaussian factors of the $X_i$'s.

\begin{remark}
In the case of $k=1$, the above characterization of extremizers is compatible with that articulated by Valdimarsson for the functional Brascamp--Lieb inequalities \cite{valdimarsson08}.   As noted in Remark \ref{rmk:FRBLrelated}, the AJN inequality is formally implied by the  Euclidean forward-reverse Brascamp--Lieb inequalities.  A characterization of extremizers for the latter remains unknown at the moment, but will necessarily involve a new ingredient of log-concavity  (since, e.g., the Pr\'ekopa--Leindler inequality is realized as a special case, and the extremizers are log-concave \cite{Dubuc}).  
\end{remark}

Before giving the proof, let us consider a few quick examples to demonstrate the result. 

\begin{example}
Consider the Shannon--Stam inequality on $E_1 = E_2 = \mathbb{R}^n$ with $\lambda\in (0,1)$, stated as
$$
\lambda h(X_1) + (1-\lambda)h(X_2) \leq h(\lambda^{1/2}X_1+ (1-\lambda)^{1/2}X_2),
$$
for independent $X_1,X_2$ with finite entropies and second moments.  There are no independent subspaces, and every maximal critical decomposition  of $K_{dep} = E_0 = \mathbb{R}^n\oplus\mathbb{R}^n $ can be written as
$$
\mathbb{R}^n\oplus\mathbb{R}^n = \bigoplus_{\ell=1}^n (\ospan\{e_{\ell}\}\oplus \ospan\{e_{\ell}\}),
$$
with $(e_{\ell})_{\ell=1}^n$ an orthonormal basis of $\mathbb{R}^n$.  Thus, (ii) is equivalent to the assertion that $X_1$ and $X_2$ must be Gaussian, with identical covariances.
\end{example}
\begin{example}
In the toy inequality \eqref{simpleAJN}, the subspace on which $X$ is supported is the only independent subspace.  So, if equality is achieved in  \eqref{simpleAJN}, then condition (i) of the theorem tells us that $X$ and $Y$ must be independent; and condition (ii) implies that $Y$ and $Z$ are Gaussian with identical covariances, as in the previous example.   %
\end{example}

\begin{example}
The Zamir--Feder inequality \cite{ZamirFeder} can be  stated as follows (see, e.g., \cite{Rioul}).  If a matrix $B \in \mathbb{R}^{k\times n}$ satisfying $B B^T = \id_{\mathbb{R}^n}$ has columns $(b_i)_{i=1}^k\subset \mathbb{R}^n$, then any random vector $X = (X_1, \dots, X_k) \in \mathcal{P}(\mathbb{R}^k)$ with independent coordinates satisfies
\begin{align}
h(B X) \geq \sum_{i=1}^k |b_i|^2 h(X_i). \label{eq:ZF}
\end{align}
Observe that this is a geometric instance of the AJN inequality, with $B_1 = B$, $d_1=1$, and $c_i = |b_i|^2$.  Letting $(e_i)_{i=1}^k$ denote the natural basis for $\mathbb{R}^k$, it follows by definitions that any independent subspace must be equal to $\ospan\{e_i\}$ for some $1\leq i \leq k$, and $\ospan\{e_i\}$ is an independent subspace iff $e_i \in \ker(B)\cup \ker(B)^{\perp}$.  
Hence,   any   $X \in \mathcal{P}(\mathbb{R}^k)$ with independent coordinates  meeting \eqref{eq:ZF} with equality has the following form:
\begin{enumerate}
\item If $e_i \in \ker(B)\cup \ker(B)^{\perp}$, then $X_i$ can have any distribution in $\mathcal{P}(\mathbb{R})$. 
\item Otherwise, $X_i$ is Gaussian. 
\end{enumerate}
Observe that $e_i \in \ker(B) \Leftrightarrow b_i = 0$; in this case, coordinate $X_i$ is not present in \eqref{eq:ZF}.  If $e_i \in \ker(B)^{\perp}$, then $X_i$ is recoverable from $BX$ in the sense that there exists $u\in \mathbb{R}^n$ such that $u^T BX = X_i$.  Hence, we might say that the extremizers in \eqref{eq:ZF} are characterized by all present non-recoverable components being Gaussian.  This is precisely the statement given by Rioul and Zamir in their recent work \cite[Theorem 1]{RioulZamir}, which gave the first characterization of   extremizers in the Zamir--Feder inequality.
\end{example}

To give an application that yields a new result, consider the following inequality     proposed in \cite{anantharam2019unifying}:
\begin{align}
c_1 h(Z_1,Z_2) + c_2 h(Y) \leq   h(Z_1+Y,Z_2+Y) + d_2 h(Z_1) + d_3   h(Z_2) + C_g, \label{eq:AJNex}
\end{align}
where the $Z_1,Z_2,Y$ are random variables with $(Z_1,Z_2)$ independent of $Y$, and all coefficients are assumed to be strictly positive.  An immediate consequence of Theorem \ref{thm:AJNinequality} is that the sharp constant $C_g$ can be computed by considering only Gaussians, and conditions on the coefficients $\mathbf{c}, \mathbf{d}$ ensuring finiteness of $C_g$ can be deduced from \eqref{eq:ScalingCond} and \eqref{eq:DimCond}.    Using Theorem \ref{thm:AJNrigidity}, we can further conclude that if $\mathbf{c}$ and $\mathbf{d}$ are such that \eqref{eq:AJNex} is extremizable, then it admits  {only} Gaussian extremizers. 

To see that this is the case, let $(\mathbf{c},\mathbf{d},\mathbf{B})$ denote the datum corresponding to \eqref{eq:AJNex}.  In matrix notation with respect to the natural choice of basis, we have 
$$
B_1 = \begin{bmatrix}
1 & 0 & 1\\
0 & 1 & 1
\end{bmatrix}, ~~B_2 = \begin{bmatrix}
1 & 0 & 0
\end{bmatrix}, ~~B_3 = \begin{bmatrix}
0 & 1 & 0
\end{bmatrix}.
$$
Assuming $(\mathbf{c},\mathbf{d},\mathbf{B})$ is extremizable, let $C$ and $(A_j)_{j=1}^3$ be the matrices in \eqref{eq:equivalentDatum} that transform $(\mathbf{c},\mathbf{d},\mathbf{B})$ to an AJN-geometric datum $(\mathbf{c},\mathbf{d},\mathbf{B'})$.  By rescaling, we can assume without loss of generality that $C = \diag(C_1,1)$, where $C_1$ is an invertible $2\times 2$ matrix. In order to show  \eqref{eq:AJNex}   admits only Gaussian extremizers, we need to show that  $(\mathbf{c},\mathbf{d},\mathbf{B'})$ admits no independent subspaces. To do this, we will show the stronger claim that 
$$
\bigcap_{j=1}^3 V_j = \{0\}
$$
for any choice of $V_j$ equal to $E^j$ or ${E^j}^{\perp}$, where we identify $E^j = \operatorname{col}(C^{-T}B_j^T A_j^{-T}) = \operatorname{col}(C^{-T}B_j^T)$, with $\operatorname{col}(\cdot)$ denoting the columnspace of its argument.  Explicitly, we have
\begin{align*}
E^1 =  \operatorname{col}\left( \begin{bmatrix} \,C_1^{-T} \,\\ 1~~~1\end{bmatrix}\right) , ~~E^2 = \operatorname{col}\left( \begin{bmatrix} \,C_1^{-T} \begin{bmatrix} 1 \\ 0\end{bmatrix} \,\\ 0\end{bmatrix}\right), ~~E^3 = \operatorname{col}\left( \begin{bmatrix} \,C_1^{-T} \begin{bmatrix} 0 \\ 1\end{bmatrix} \,\\ 0\end{bmatrix}\right).
\end{align*}
Direct computation shows 
\begin{align*}
{E^1}^{\perp} =  \operatorname{col}\left( \begin{bmatrix} \,C_1 \begin{bmatrix} 1 \\ 1\end{bmatrix}  \,\\ -1\end{bmatrix}\right) , ~~{E^2}^{\perp} = \operatorname{col}\left( \begin{bmatrix} \begin{matrix} 0 \\ 0\end{matrix} & C_1 \begin{bmatrix} 0 \\ 1\end{bmatrix} \,\\ 1 & 0\end{bmatrix}\right), ~~{E^3}^{\perp} = \operatorname{col}\left( \begin{bmatrix} \begin{matrix} 0 \\ 0\end{matrix} & C_1 \begin{bmatrix} 1 \\ 0\end{bmatrix} \,\\ 1 & 0\end{bmatrix}\right).
\end{align*}
The problem now reduces to casework.  By inspection, we   have ${E^1}^{\perp}  \cap E^2 = {E^1}^{\perp}  \cap E^3 = \{0\}$.  Next, since
  $C_1$ is invertible, we have $E^2\cap E^3 = \{0\}$, and  it similarly follows that $E^1 \cap E^2 = E^1 \cap E^3 = {E^{1}}^{\perp} \cap  {E^{2}}^{\perp} = \{0\}$. It only remains to show that $E^1  \cap  {E^{2}}^{\perp} \cap  {E^{3}}^{\perp} = \{0\}$.  To this end, invertibility of $C_1$ allows us to write
$$
{E^2}^{\perp} \cap {E^3}^{\perp} = \operatorname{col}\left( \begin{bmatrix}  0 \\0 \\1\end{bmatrix}\right).
$$
 However, the only vector in $E^1$ that is zero in the first two components is the all-zero vector (again, by invertibility of $C_1$), so it follows that $E^1  \cap  {E^{2}}^{\perp} \cap  {E^{3}}^{\perp} = \{0\}$, and we conclude that the datum $(\mathbf{c},\mathbf{d},\mathbf{B'})$ admits no independent subspaces. 

Although the above shows   \eqref{eq:AJNex} can only admit Gaussian extremizers, it does not tell us whether any exist, or their structure if they do.  This is, however,  the content of Theorem \ref{thm:extremizableIFFgeometricAJN}.  Namely, the covariances of Gaussian extremizers are characterized completely by solutions $K$ to \eqref{eq:AJNExtremizabilityCond} for the datum $(\mathbf{c},\mathbf{d},\mathbf{B})$; see Remark \ref{rmk:Kstructure}.    This emphasizes the complementary nature of Theorems \ref{thm:AJNrigidity} and \ref{thm:extremizableIFFgeometricAJN}.

\subsection{Proof of Theorem \ref{thm:AJNrigidity}}
The remainder of this section is dedicated to the proof of Theorem \ref{thm:AJNrigidity}.  We  establish the assertion of sufficiency first, and necessity second.   The assumption that the datum $(\mathbf{c},\mathbf{d},\mathbf{B})$ is AJN-geometric prevails throughout.  Accordingly we will regard $E^j$ as a subspace of $E_0$, with $\Pi_{E^j} = B_j^T B_j$ denoting the orthogonal projection onto  $E^j$.

\begin{lemma} \label{lem:SubspaceFacts} 
Let the notation of \eqref{eq:decomposeEi} and \eqref{eq:defKdep} prevail. For each $1\leq j\leq m$, we have the orthogonal decomposition
\begin{align}
E^j =  (\Pi_{E^j} K_{dep})  \oplus   \left( \bigoplus_{i =1}^k \bigoplus_{\substack{ 1\leq \ell \leq n_i:\\ K^i_{\ell} \subset E^j} }  K^i_{\ell} \right).\label{eq:decompEj}
\end{align}
Moreover, for any critical decomposition  $K_{dep} = \oplus_{\ell=1}^n K^\ell_{dep}$, we have the orthogonal decomposition 
\begin{align}
\Pi_{E^j} K_{dep} = \oplus_{\ell=1}^n \Pi_{E^j} K^{\ell}_{dep}. \label{eq:decompBjKdep}
\end{align}
\end{lemma}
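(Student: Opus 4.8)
The plan is to handle the two asserted decompositions separately. The first, \eqref{eq:decompEj}, is essentially bookkeeping about independent subspaces together with the direct-sum structure of $E_0$; the second, \eqref{eq:decompBjKdep}, requires the real input, namely a workable characterization of critical subspaces for AJN-geometric data.

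\textbf{The decomposition \eqref{eq:decompEj}.} I would begin from the orthogonal decomposition $E_0 = K_{dep}\oplus\bigoplus_{i=1}^k\bigoplus_{\ell=1}^{n_i}K^i_\ell$, which is \eqref{eq:decomposeEi} summed over $i$ together with \eqref{eq:defKdep}, and apply $\Pi_{E^j}$ to both sides. Each independent subspace $K^i_\ell = E_i\cap\bigcap_{j'}V_{j'}$ is, by definition, contained in $V_j$, which is $E^j$ or ${E^j}^\perp$; since $K^i_\ell\neq\{0\}$ it cannot lie in both, so exactly one alternative holds, and correspondingly $\Pi_{E^j}K^i_\ell$ equals $K^i_\ell$ or $\{0\}$. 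Hence $E^j=\Pi_{E^j}E_0 = \Pi_{E^j}K_{dep} + \sum_{i,\ell:\,K^i_\ell\subseteq E^j}K^i_\ell$, and conversely this right-hand side is visibly contained in $E^j$. Orthogonality of the sum is checked directly: the $K^i_\ell$ are pairwise orthogonal (distinct independent subspaces are orthogonal), and any $K^i_\ell\subseteq E^j$ is orthogonal to $\Pi_{E^j}K_{dep}$ because $K^i_\ell\perp K_{dep}$ — it is orthogonal to $K^i_0$ by construction and to each $K^{i'}_0$, $i'\neq i$, since $E_i\perp E_{i'}$ — and $\Pi_{E^j}$ fixes $K^i_\ell$ pointwise, so $\langle w,\Pi_{E^j}v\rangle=\langle\Pi_{E^j}w,v\rangle=\langle w,v\rangle=0$ for $w\in K^i_\ell$ and $v\in K_{dep}$.

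\textbf{The decomposition \eqref{eq:decompBjKdep}.} Here I would first establish the key claim: \emph{for an AJN-geometric datum, a product-form subspace $T=\oplus_i T_i$ is critical if and only if $T=(T\cap E^j)\oplus(T\cap{E^j}^\perp)$ for every $1\leq j\leq m$.} The algebraic engine is the identity $\sum_{j=1}^m d_j\Tr(\Pi_{E^j}\Pi_T)=\sum_{i=1}^k c_i\dim T_i$, valid for every product-form $T$: writing $\Pi_T=\sum_i\Pi_{T_i}$ (legitimate as the $E_i$, hence the $T_i$, are mutually orthogonal), using $\Pi_{T_i}=\Pi_{E_i}\Pi_{T_i}\Pi_{E_i}$, cyclicity of trace, and the AJN-geometric operator identity \eqref{eq:AJNGeometricOperatorIneq2}, the left side collapses term by term. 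On the other hand, for each fixed $j$ the operator $\Pi_T\Pi_{E^j}\Pi_T$ is positive semidefinite with operator norm at most $1$ and rank equal to $\dim\Pi_{E^j}T=\dim B_jT$, so $\Tr(\Pi_{E^j}\Pi_T)\leq\dim B_jT$, with equality exactly when all nonzero eigenvalues of $\Pi_T\Pi_{E^j}\Pi_T$ equal $1$, i.e.\ when $\Pi_T\Pi_{E^j}\Pi_T$ is itself an orthogonal projection; a short argument comparing $\|\Pi_{E^j}v\|$ with $\|v\|$ on its range and kernel shows this is equivalent to the splitting $T=(T\cap E^j)\oplus(T\cap{E^j}^\perp)$. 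Combining, $T$ is critical (i.e.\ $\sum_i c_i\dim T_i=\sum_j d_j\dim B_jT$) iff equality holds in every term, i.e.\ iff $T$ splits along each $E^j$. This incidentally re-derives the dimension condition \eqref{eq:DimCond} for AJN-geometric data.

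Granted the claim, \eqref{eq:decompBjKdep} is immediate: for a critical decomposition $K_{dep}=\oplus_{\ell=1}^n K^\ell_{dep}$, applying the claim to each critical $K^\ell_{dep}$ gives $\Pi_{E^j}K^\ell_{dep}=K^\ell_{dep}\cap E^j\subseteq K^\ell_{dep}$; since the $K^\ell_{dep}$ are mutually orthogonal, so are the subspaces $\Pi_{E^j}K^\ell_{dep}$, and $\Pi_{E^j}K_{dep}=\Pi_{E^j}\big(\textstyle\sum_\ell K^\ell_{dep}\big)=\bigoplus_{\ell=1}^n\Pi_{E^j}K^\ell_{dep}$ by linearity. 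I expect the main obstacle to be the criticality characterization, and within it the small but slightly delicate linear-algebra fact that $\Tr(\Pi_{E^j}\Pi_T)=\dim\Pi_{E^j}T$ forces $T$ to split orthogonally along $E^j$; everything else is direct-sum bookkeeping and a trace computation.
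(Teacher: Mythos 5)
Your proof is correct and follows essentially the same route as the paper's: the first decomposition is obtained by the same argument (apply $\Pi_{E^j}$ to the orthogonal decomposition of $E_0$, note each independent subspace lands on itself or on $\{0\}$, and check orthogonality via self-adjointness of $\Pi_{E^j}$), and the second rests on the same two ingredients, namely the trace identity $\sum_j d_j\Tr(\Pi_{E^j}\Pi_T)=\sum_i c_i\dim(\pi_{E_i}T)$ coming from \eqref{eq:AJNGeometricOperatorIneq} and the contraction bound $\Tr(\Pi_{E^j}\Pi_T)\le\dim(\Pi_{E^j}T)$. The only difference is in how the equality case is unpacked: the paper states that $\Pi_{T}\Pi_{E^j}$ is an isometry from $\Pi_{E^j}T$ into $T$, while you deduce the splitting $T=(T\cap E^j)\oplus(T\cap {E^j}^{\perp})$, i.e.\ $\Pi_{E^j}T\subset T$ --- these are equivalent, and your packaging of the result as an ``iff'' characterization of criticality for geometric data is a clean and slightly more transparent way to conclude \eqref{eq:decompBjKdep} for all $\ell$ at once rather than by induction on pairs.
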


\begin{proof}[Proof of Proposition \ref{prop:criticalDecomp} and Lemma \ref{lem:SubspaceFacts}]
  We first note that $\Pi_{E^j} K_{dep}$ is orthogonal to $\Pi_{E^j} K$, for any independent subspace $K$.  Indeed, by definition of an independent subspace, we either have $\Pi_{E^j} K=\{0\}$ or $\Pi_{E^j} K=K$.  The former is trivially orthogonal to $\Pi_{E^j} K_{dep}$, and the latter is orthogonal to $\Pi_{E^j} K_{dep}$ since $K_{dep}$ is orthogonal to $K$ by definition and $\Pi_{E^j}$ is self-adjoint.  Indeed, 
$$
(\Pi_{E^j} x)^T y = (\Pi_{E^j} x)^T y = x^T (\Pi_{E^j}y) = x^Ty = 0, ~~\forall x\in K_{dep}, y\in K. 
$$
This establishes \eqref{eq:decompEj}.

Now, using the decomposition \eqref{eq:decomposeEi} and the scaling condition \eqref{eq:ScalingCond} (which holds by AJN-geometricity), we have 
\begin{align*}
\sum_{i=1}^k c_i \sum_{\ell=0}^{n_i}\dim(K_{\ell}^i) = \sum_{i=1}^k c_i \dim(E_i) 
&= \sum_{j=1}^m d_j \dim(E^j) \\
&= \sum_{j=1}^m d_j \dim(\Pi_{E^j} K_{dep}) +  \sum_{j : K_{\ell}^i \subset E^j}^m d_j \dim( K^i_{\ell}).
\end{align*}
To summarize, 
\begin{align}
\sum_{i=1}^k c_i \sum_{\ell=0}^{n_i}\dim(K_{\ell}^i) &=  \sum_{j=1}^m d_j \dim(\Pi_{E^j}  K_{dep}) +  \sum_{j : K_{\ell}^i \subset E^j}^m d_j \dim( K^i_{\ell}). \label{eq:indScalingCond}
\end{align}
Since each independent subspace is of product form, the dimension condition \eqref{eq:DimCond} implies, for each $1\leq i \leq k$ and $1\leq \ell\leq n_i$,  
\begin{align}
 c_i  \dim(K_{\ell}^i) &\leq   \sum_{j : K_{\ell}^i \subset E^j}^m d_j \dim( K^i_{\ell}). \label{eq:indAreCritical}
\end{align}
Likewise, since $K_{dep} = \oplus_{i=1}^k K_0^i$ is of product form, \eqref{eq:DimCond} also implies
\begin{align}
\sum_{i=1}^k c_i  \dim(K_{0}^i) &\leq \sum_{j=1}^m d_j \dim(\Pi_{E^j} K_{dep}).  \label{eq:depIsCritical}
\end{align}
Comparing against \eqref{eq:indScalingCond},  we necessarily have equality in \eqref{eq:indAreCritical} and \eqref{eq:depIsCritical}, which proves  that $K_{dep}$ is critical.  Thus, there exists at least one critical decomposition of $K_{dep}$ (the trivial one), and Proposition \ref{prop:criticalDecomp} follows. 

It remains to show \eqref{eq:decompBjKdep}. By induction, it suffices to show if $K\subset E_0$ is a critical subspace, and $K = K_1 \oplus K_2$ is a critical decomposition, then $\Pi_{E^j}  K_1$ and $\Pi_{E^j}  K_2$ are orthogonal complements in $\Pi_{E^j} K$.   The proof is similar to that of \cite[Lemma 7.12]{bennett2008brascamp}. Letting $\Pi_{K_1}: E_0 \to E_0$ denote the orthogonal  projection onto $K_1$, we have that $\Pi_{E^j}\Pi_{K_1}$ is a contraction in $E_0$, so   $\Tr(\Pi_{E^j}\Pi_{K_1})\leq \dim(\Pi_{E^j}K_1)$. Since $K_1$ is critical, it is product-form by definition and therefore $\Pi_{K_1} = \sum_{i=1}^k \Pi_{E_i}\Pi_{K_1}\Pi_{E_i}$.  From \eqref{eq:AJNGeometricOperatorIneq}, this implies
\begin{align*}
\sum_{i=1}^k c_i \dim(\Pi_{E_i} K_1)&= \sum_{i=1}^k c_i \Tr(\Pi_{E_i} \Pi_{K_1})  = \sum_{j=1}^m d_j \Tr(\Pi_{E^j}\Pi_{K_1}) \leq \sum_{j=1}^m d_j\dim(\Pi_{E^j}K_1).
\end{align*}
Since $K_1$ is critical, we have equality throughout, implying $\Tr(\Pi_{E^j}\Pi_{K_1})= \dim(\Pi_{E^j}K_1)$ for each $j$.  From this, we can conclude that  $\Pi_{K_1}\Pi_{E^j}$ is an isometry from $\Pi_{E^j}K_1$ into $K_1$, and similarly $\Pi_{K_2}\Pi_{E^j}$ is an isometry from $\Pi_{E^j}K_2$ into $K_2$.  Since $K_1$ and $K_2$ are orthogonal complements in $K$, it follows that $\Pi_{E^j}  K_1$ and $\Pi_{E^j}  K_2$ are orthogonal complements in $\Pi_{E^j} K$. 
\end{proof}

\begin{proof}[Sufficiency of conditions (i)-(ii) in Theorem \ref{thm:AJNrigidity}]
Let $X_i \sim \mathcal{P}(E_i)$, $1\leq i \leq k$ be independent and satisfy (i)-(ii), and let $X=(X_1,\dots, X_k)$.   By the orthogonal decomposition \eqref{eq:decompEj} and the independence assumptions imposed by (i), we can decompose
\begin{align}
h(B_j X) = h(B_j \Pi_{K_{dep}}(X)) + \sum_{i =1}^k \sum_{\substack{ 1\leq \ell \leq n_i:\\ K^i_{\ell} \subset E^j} } h( \Pi_{K^i_{\ell}}(X_i) ),\label{eq:AJNstep1}
\end{align}
where all entropies are computed with respect to the subspace being projected upon.  
In the proof of  Lemma \ref{lem:SubspaceFacts}, we found \eqref{eq:indAreCritical} was met with equality.  So, whenever $E_i$ contains an independent subspace (i.e., $n_i\geq 1$), we have 
$$
c_i  =   \sum_{j : K_{\ell}^i \subset E^j}^m d_j.
$$
Now, using the decomposition \eqref{eq:decomposeEi} and the independence assumptions imposed by (i), an application of the above identity followed by \eqref{eq:AJNstep1} reveals
\begin{align*}
\sum_{i=1}^k c_i h(X_i) &= \sum_{i=1}^k \sum_{\ell=0}^{n_i} c_i h(\Pi_{K_{\ell}^i}(X_i)) \\
&= \sum_{i=1}^k   c_i h(\Pi_{K_{0}^i}(X_i))    +    \sum_{i=1}^k   \sum_{\ell=1}^{n_i}  \sum_{j : K_{\ell}^i \subset E^j}^m d_j h(\Pi_{K_{\ell}^i}(X_i)) \\
&=\sum_{i=1}^k   c_i h(\Pi_{K_{0}^i}(X_i))    +   \sum_{j=1}^m d_j  \sum_{i=1}^k   \sum_{\substack{ 1\leq \ell \leq n_i:\\ K^i_{\ell} \subset E^j} } h(\Pi_{K_{\ell}^i}(X_i)) \\
&=\sum_{i=1}^k   c_i h(\Pi_{K_{0}^i}(X_i))    +   \sum_{j=1}^m d_j  \left( h(B_j X) - h(B_j \Pi_{K_{dep}}(X))  \right).
\end{align*}
 In summary, 
\begin{align}
\sum_{i=1}^k c_i h(X_i) -\sum_{j=1}^m d_j  h(B_j X) = \sum_{i=1}^k   c_i h(\Pi_{K_{0}^i}(X_i))   - \sum_{j=1}^m d_j  h(B_j \Pi_{K_{dep}}(X)), \label{eq:OnlyDepSubspaceLeft}
\end{align}
where any entropies over the trivial subspace $\{0\}$ are to be neglected. 

It remains to show the RHS is zero.  By (ii) and translation invariance of entropy, we can assume that $\Pi_{K^{\ell}_{dep}}(X)\sim N(0,\sigma_{\ell}^2 \id_{K^{\ell}_{dep}})$ for each $1\leq \ell\leq n$.  Using the independence assumption in (ii) and the decomposition \eqref{eq:decompBjKdep}, we can express 
$$
h(B_j \Pi_{K_{dep}}(X)) = \sum_{\ell=1}^{n} \frac{\dim(B_j K_{dep}^{\ell})}{2}\log(2\pi e \sigma_{\ell}^2).
$$
Since each $K_{dep}^{\ell}$ is critical by definition, we have
\begin{align*}
\sum_{j=1}^m d_j h(B_j \Pi_{K_{dep}}(X)) &= \sum_{\ell=1}^n  \frac{1}{2} \log(2\pi e \sigma_{\ell}^2) \sum_{j=1}^m d_j \dim(B_j K_{dep}^{\ell})\\
&= \sum_{\ell=1}^n  \frac{1}{2} \log(2\pi e \sigma_{\ell}^2) \sum_{i=1}^k c_i \dim(\pi_{E_i} K_{dep}^{\ell})\\
&=  \sum_{i=1}^k c_i \sum_{\ell=1}^n  \frac{\dim(\pi_{E_i} K_{dep}^{\ell})}{2} \log(2\pi e \sigma_{\ell}^2) \\
&=  \sum_{i=1}^k c_i h(\Pi_{K_{0}^i}(X_i)),
\end{align*}
where we used the independence assumption in (ii) for the last line.  Putting everything together shows
$$
\sum_{i=1}^k c_i h(X_i) =\sum_{j=1}^m d_j  h(B_j X), 
$$
so that (i) and (ii) are sufficient conditions for the $X_i$'s to be extremal, since $C_g(\mathbf{c},\mathbf{d},\mathbf{B})=0$ by Proposition \ref{prop:AJNGeomCgEquals0}. 
\end{proof}

As we turn our attention to the necessity part of Theorem \ref{thm:AJNrigidity}, we  record several  technical lemmas for convenience.  We define $\psd(E)$ to be the closure of $\pd(E)$ (i.e., the positive semidefinite symmetric linear operators on $E$).  For $A_i \in \psd(E_i)$, $1\leq i \leq k$, we define the set $\Pi(A_1,\dots, A_k) \subset \psd(E_0)$ to be the set of symmetric positive semidefinite linear maps $A: E_0 \to E_0$ satisfying
$$
\pi_{E_i}A \pi_{E_i}^T = A_i, ~~~1\leq i \leq k.
$$
In terms of matrices, this means $A \in \Pi(A_1,\dots, A_k)$ iff $A$ is a positive semidefinite matrix with diagonal blocks $A_1, \dots, A_k$.  

\begin{lemma}\label{lem:AJNrigidMatrix}
Let $(\mathbf{c},\mathbf{d},\mathbf{B})$ be AJN-geometric, and $A_i \in \psd(E_i)$, $1\leq i \leq k$.  For any $A\in \Pi(A_1,\dots, A_k)$, we have 
\begin{align}
\sum_{i=1}^k c_i \Tr\left( ( A_i   - \id_{E_i})^2 \right)  \geq \sum_{j=1}^m d_j \Tr\left( (  (B_j A^2 B_j^T)^{1/2}   - \id_{E^j} )^2 \right), \label{eq:AJNmatrixIneqLemma}
\end{align}
with equality if and only if $(\id_{E_0}- \Pi_{E^j})A \Pi_{E^j} =0$ for each $1\leq j \leq m$. 
\end{lemma}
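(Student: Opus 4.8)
The plan is to turn the inequality into a pointwise (eigenvalue) comparison by exploiting the AJN-geometric identities. First, I would expand both sides using $\Tr((M-\id)^2) = \Tr(M^2) - 2\Tr(M) + \dim$. For the left side with $M = A_i$ this gives $\sum_i c_i\Tr(A_i^2) - 2\sum_i c_i \Tr(A_i) + \sum_i c_i\dim(E_i)$; for the right side with $M = (B_j A^2 B_j^T)^{1/2}$ it gives $\sum_j d_j \Tr(B_j A^2 B_j^T) - 2\sum_j d_j\Tr((B_j A^2 B_j^T)^{1/2}) + \sum_j d_j\dim(E^j)$. The constant terms agree by the scaling condition \eqref{eq:ScalingCond} (valid by AJN-geometricity, Remark \ref{rmk:scaling}). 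The quadratic terms agree \emph{exactly}: since $A\in\Pi(A_1,\dots,A_k)$, one has $A_i = \pi_{E_i}A\pi_{E_i}^T$, so $\Tr(A_i^2) = \Tr(\pi_{E_i}A\pi_{E_i}^T\pi_{E_i}A\pi_{E_i}^T)$, and after summing against $c_i$ and applying \eqref{eq:AJNGeometricOperatorIneq} (in the form $\sum_i c_i \pi_{E_i}^T\pi_{E_i} = \sum_j d_j B_j^T B_j$ once one inserts it correctly) together with cyclicity of trace, $\sum_i c_i\Tr(A_i^2) = \sum_j d_j \Tr(B_j A B_j^T \cdot B_j A B_j^T)$... — more carefully, I would use $\sum_i c_i \Tr(A_i^2) = \Tr(A \,(\oplus_i c_i\id_{E_i})\, A \,(\oplus_i c_i\id_{E_i})/\dots)$; the clean route is $\sum_i c_i\Tr(A_i^2) = \sum_i c_i \Tr((\pi_{E_i}A\pi_{E_i}^T)^2) = \Tr(A D A D)$ is not quite it, so instead observe $\sum_i c_i\Tr(A_i^2)=\sum_j d_j\Tr(B_jA^2B_j^T)$ follows by writing $A_i^2 = \pi_{E_i}A\pi_{E_i}^T\pi_{E_i}A\pi_{E_i}^T$ and $\sum_i c_i \pi_{E_i}^T\pi_{E_i}=\sum_j d_j\pi_{E_i}^T\cdots$ — I will reduce it cleanly to the operator identity \eqref{eq:AJNGeometricOperatorIneq}.

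Granting that the constant and quadratic terms cancel, \eqref{eq:AJNmatrixIneqLemma} reduces to the linear comparison
\begin{align*}
\sum_{i=1}^k c_i \Tr(A_i) \;\leq\; \sum_{j=1}^m d_j \Tr\big((B_j A^2 B_j^T)^{1/2}\big).
\end{align*}
For the left side, $\sum_i c_i\Tr(A_i) = \sum_i c_i\Tr(\pi_{E_i}A\pi_{E_i}^T) = \Tr\big(A\sum_i c_i\pi_{E_i}^T\pi_{E_i}\big) = \Tr\big(A\sum_j d_j B_j^T B_j\big) = \sum_j d_j\Tr(B_j A B_j^T)$, again by \eqref{eq:AJNGeometricOperatorIneq}. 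So it suffices to prove, for each $j$ separately, the elementary matrix inequality
\begin{align*}
\Tr(B_j A B_j^T) \;\leq\; \Tr\big((B_j A^2 B_j^T)^{1/2}\big),
\end{align*}
with equality iff $(\id_{E_0}-\Pi_{E^j})A\Pi_{E^j}=0$, i.e. iff $B_j^T B_j A$ has range inside that of $B_j^T$ in the appropriate sense. This per-$j$ inequality I would prove by writing $B_j^T = P$ an isometric embedding of $E^j$ into $E_0$ (since $B_jB_j^T=\id$, so $B_j = P^T$, $\Pi_{E^j}=PP^T$), so the claim becomes $\Tr(P^TAP)\le \Tr((P^TA^2P)^{1/2})$ for $A\succeq0$ and $P$ a partial isometry. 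Setting $Y = A P$, the right side is $\Tr((Y^TY)^{1/2}) = \|Y\|_1$ (nuclear norm) and the left side is $\Tr(P^T Y) = \Tr(P^T A P) = \langle P, Y\rangle_{\HS}$; since $\|P\|_{op}=1$, $\langle P,Y\rangle \le \|P\|_{op}\|Y\|_1 = \|Y\|_1$ by the trace duality (von Neumann / Hölder) inequality, which is exactly what we want. Equality in trace duality with $\|P\|_{op}=1$ holds iff $P$ acts as an isometry on the range of $Y=AP$; unwinding, this says $AP$ maps into the range of $P$, i.e. $(\id-PP^T)AP=0$, i.e. $(\id_{E_0}-\Pi_{E^j})A\Pi_{E^j}=0$, as claimed.

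The main obstacle I anticipate is \emph{not} the analytic core — the nuclear-norm/operator-norm duality step is standard — but the bookkeeping that makes the constant and quadratic terms cancel precisely: I need to be careful that the operator identity \eqref{eq:AJNGeometricOperatorIneq} is used in the exact form $\sum_i c_i\,\pi_{E_i}^T\pi_{E_i} = \sum_j d_j\, B_j^T B_j$ as operators on $E_0$ (which follows by summing the block identities, since the left side is block-diagonal with blocks $c_i\id_{E_i}$ and the right side has $i$-th diagonal block $\sum_j d_j \pi_{E_i}B_j^TB_j\pi_{E_i}^T = c_i\id_{E_i}$ — but I should check the \emph{off}-diagonal blocks of $\sum_j d_jB_j^TB_j$ vanish, which requires $A\in\Pi(\cdot)$ to interact correctly; in fact they need not vanish, so the identity I want is the weaker $\pi_{E_i}(\sum_j d_j B_j^T B_j)\pi_{E_i}^T = c_i\id_{E_i}$, and I must route every trace computation through $\pi_{E_i}$-conjugations rather than through a global operator identity). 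Handling this carefully — using only $\sum_j d_j\pi_{E_i}B_j^TB_j\pi_{E_i}^T=c_i\id_{E_i}$ and the constraint $A_i=\pi_{E_i}A\pi_{E_i}^T$ — is the delicate step, and I would organize the proof so that each of the three term-types (constant, quadratic, linear) is reduced to that single identity before invoking trace duality for the final per-$j$ inequality.
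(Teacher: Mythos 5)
Your overall strategy---expand $\Tr((M-\id)^2)$, cancel the constant terms via the scaling condition and the quadratic and linear terms via geometricity, and reduce to the per-$j$ inequality $\Tr(B_jAB_j^T)\le\Tr((B_jA^2B_j^T)^{1/2})$---is exactly the paper's; the paper packages the three cancellations into the single identity $\sum_i c_i\Tr((A_i-\id_{E_i})^2)=\sum_j d_j\Tr(B_j(A-\id_{E_0})^2B_j^T)$. Where you differ is the remaining per-$j$ step: you use Schatten--H\"older duality $\Tr(P^TY)\le\|P\|_{op}\|Y\|_1$ with $P=B_j^T$ and $Y=AP$, whereas the paper notes $(B_jAB_j^T)^2\le B_jA^2B_j^T$ and invokes operator monotonicity of the square root. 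Both are standard, both are correct, and both yield the same equality condition; your phrase ``$P$ acts as an isometry on the range of $Y$'' is garbled ($P$ is defined on $E^j$, not on $E_0$), but the condition you extract, $\operatorname{range}(AP)\subset\operatorname{range}(P)$, i.e.\ $(\id_{E_0}-\Pi_{E^j})A\Pi_{E^j}=0$, is the right one.

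The genuine problem is precisely the step you flag at the end and defer. For $A\in\Pi(A_1,\dots,A_k)$ with nonzero off-diagonal blocks the quadratic terms do \emph{not} cancel: $\Tr(A^2\sum_jd_jB_j^TB_j)$ picks up contributions from the off-diagonal blocks of $A^2$ against those of $\sum_jd_jB_j^TB_j$, and even the diagonal blocks of $A^2$ equal $A_i^2$ plus a nonnegative correction. No ``routing through $\pi_{E_i}$-conjugations'' can repair this, because the inequality itself fails in that generality. Concretely, take $k=2$, $E_1=E_2=\R$, $m=1$, $B_1=\id_{\R^2}$, $c_1=c_2=d_1=1$ (an AJN-geometric datum) and $A=\bigl(\begin{smallmatrix}1&t\\ t&1\end{smallmatrix}\bigr)\in\Pi(1,1)$ with $0<t\le 1$: the left side of \eqref{eq:AJNmatrixIneqLemma} is $0$ while the right side is $\Tr((A-\id)^2)=2t^2>0$. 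The fix is not more careful bookkeeping but the restriction to block-diagonal $A=\diag(A_1,\dots,A_k)$: then $A^2=\diag(A_1^2,\dots,A_k^2)$, every trace against $\sum_jd_jB_j^TB_j$ sees only its diagonal blocks $c_i\id_{E_i}$, and all three cancellations follow from \eqref{eq:AJNGeometricOperatorIneq}. This is the reading the paper's proof takes (``using the block-diagonal structure of $A$''), and it suffices for the application, since the lemma is only ever invoked for the block-diagonal process $\Gamma_t=\diag(\Gamma_t^1,\dots,\Gamma_t^k)$. With that restriction made explicit, your argument is complete and correct.
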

\begin{proof}
Using the block-diagonal structure of $A$ and the definition of AJN-geometricity, we have
\begin{align*}
\sum_{i=1}^k c_i \Tr\left( ( A_i   - \id_{E_i})^2 \right)  &=\sum_{j=1}^m d_j \Tr( B_j(A - \id_{E_0})^2 B_j )\\
&=\sum_{j=1}^m d_j \Tr( B_j A^2B_j^T - 2B_j A B_j^T   + \id_{E^j} )\\
&\geq \sum_{j=1}^m d_j \Tr( B_j A^2B_j^T - 2(B_j A^2 B_j^T)^{1/2}   + \id_{E^j} )\\
&= \sum_{j=1}^m d_j \Tr\left( (  (B_j A^2 B_j^T)^{1/2}   - \id_{E^j} )^2 \right),
\end{align*}
where the  inequality follows because  square root is operator monotone.  More precisely, AJN-geometricity implies
$$
(B_j A B_j^T)^2 =B_j A B_j^T B_j A B_j^T    \leq B_j A^2 B_j^T,
$$
so that operator monotonicity of square root gives $B_j A B_j^T \leq (B_j A^2 B_j^T)^{1/2}$.  Equality in \eqref{eq:AJNmatrixIneqLemma} is therefore equivalent to equality above, which can be rewritten as
$$
B_j A (\id_{E_0}-B_j^T B_j) A B_j^T  =0 ~~\Leftrightarrow~~(\id_{E_0}- \Pi_{E^j})A \Pi_{E^j} =0.
$$
\end{proof}

The following is due to \cite{EldanMikulincer}; we sketch the proof for completeness. 
\begin{lemma}\label{lem:EntUpperBoundGamma}  Fix a Euclidean space $E$.  
Consider a filtered probability space carrying an $E$-valued Brownian motion $(W_t)_{\geq 0}$, and let $(F_t)_{\geq 0}$  be an adapted process taking values in $\pd(E)$.  If $\int_{0}^1 F_t dW_t \sim \mu$, then 
$$
D(\mu \|\gamma_E)\leq \frac{1}{2} \int_0^1 \frac{  \EE    \Tr\left( (  F_t   - \id_{E} )^2 \right) }{1-t}dt .
$$
\end{lemma}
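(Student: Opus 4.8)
The plan is to realize the target measure $\mu$ via the F\"ollmer-type stochastic representation and then use the entropy-to-energy dictionary together with a pathwise comparison. First I would note that $X_1 := \int_0^1 F_t\,dW_t$ is a centered Gaussian-adapted stochastic integral, so by the Girsanov/F\"ollmer formalism (Appendix~\ref{sec:FollmersDrift}) one has the variational bound $D(\mu\|\gamma_E)\le \tfrac12\EE\int_0^1 |v_t|^2\,dt$ for any drift $(v_t)$ such that $W_1 + \int_0^1 v_s\,ds \sim \mu$; equivalently, one sets up the F\"ollmer drift of $\mu$ and controls it by any competitor process. The natural competitor here is built from the given $(F_t)$: writing $M_t := \int_0^t F_s\,dW_s$, one has $M_1 \sim \mu$, and the idea is to run the F\"ollmer bridge construction so that the optimal drift is compared against the drift implicit in $dM_t = F_t\,dW_t$.

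The cleanest route is to interpolate. Define the process $Y_t$ by $Y_t = \int_0^t F_s\,dW_s + \int_t^1 F_s'\,dW_s$ for a suitable deterministic-or-adapted choice, or more simply use the standard fact that for $M_1 = \int_0^1 F_s\,dW_s$ the relative entropy against $\gamma_E = \mathcal L(\int_0^1 dW_s)$ satisfies
\begin{align}
D(\mathcal L(M_1)\|\gamma_E) \le \frac12 \int_0^1 \frac{\EE\,\Tr\big((F_t - \id_E)^2\big)}{1-t}\,dt, \notag
\end{align}
which is exactly the claimed inequality. To prove this I would use the chain rule for relative entropy along the filtration generated by the bridge: conditionally on $\mathcal F_t$, the remaining increment $\int_t^1 F_s\,dW_s$ must be compared to $\int_t^1 dW_s \sim N(0,(1-t)\id_E)$; the factor $\tfrac{1}{1-t}$ arises because the discrepancy $(F_s - \id_E)$ accumulated on $[t,1]$ is ``seen'' by a Gaussian of variance $(1-t)$, and integrating the infinitesimal relative-entropy production $\tfrac12\EE\Tr((F_t-\id_E)^2)\,dt$ against the reciprocal bridge variance gives the stated kernel. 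Concretely, one applies It\^o's formula to $t \mapsto D\big(\mathcal L(M_1 \mid \mathcal F_t)\,\|\,N(M_t, (1-t)\id_E)\big)$ and uses that the bracket term is nonnegative while the drift term is dominated by $\tfrac{\Tr((F_t-\id_E)^2)}{2(1-t)}$.

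The main obstacle is making the conditional-entropy decomposition rigorous: one needs $\mathcal L(M_1\mid\mathcal F_t)$ to remain absolutely continuous (hence the hypothesis $F_t \in \pd(E)$, strictly positive definite, rather than merely $\psd$), and one needs to justify differentiating the relative entropy in $t$ and discarding the martingale part after taking expectations — a localization argument handles the latter, and the strict positivity of $F_t$ together with finiteness of $\int_0^1\EE\Tr((F_t-\id_E)^2)/(1-t)\,dt$ (which we may assume, else there is nothing to prove) handles the former. Since the lemma is attributed to \cite{EldanMikulincer}, I would invoke their computation for the detailed It\^o bookkeeping and present only this sketch, emphasizing the bridge-variance weighting as the conceptual point.
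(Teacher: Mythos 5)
You have the right high-level strategy --- bound $D(\mu\|\gamma_E)$ by half the energy of a drift $v$ with $W_1+\int_0^1 v_s\,ds\sim\mu$ (Proposition \ref{prop:FollmerLowerBound} plus data processing) --- but the proof of the lemma \emph{is} the explicit construction of that drift, and you never produce it. The paper takes
$$
u_t:=\int_0^t \frac{F_s-\id_E}{1-s}\,dW_s,
$$
and verifies by the stochastic Fubini theorem that
$$
\int_0^1 F_t\,dW_t = W_1+\int_0^1 (F_t-\id_E)\,dW_t = W_1+\int_0^1\!\!\int_t^1 \frac{F_t-\id_E}{1-t}\,ds\,dW_t = W_1+\int_0^1 u_s\,ds .
$$
Then It\^o's isometry gives $\EE|u_t|^2=\int_0^t (1-s)^{-2}\,\EE\Tr\bigl((F_s-\id_E)^2\bigr)ds$, and a second (ordinary) Fubini over $\{0\le s\le t\le 1\}$ produces exactly the kernel $\tfrac{1}{1-s}$, since $\int_s^1(1-s)^{-2}\,dt=(1-s)^{-1}$. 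Your heuristic about the ``bridge variance'' gestures at the right answer, but the mechanism is just this Fubini computation, and without writing down $u_t$ the factor $\tfrac{1}{1-t}$ is asserted rather than derived.

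The substitute argument you propose --- applying It\^o's formula to $t\mapsto D\bigl(\law(M_1\mid\mathcal F_t)\,\|\,N(M_t,(1-t)\id_E)\bigr)$ --- is not carried out and is not obviously executable: for a general adapted $(F_t)$ the conditional law of $M_1$ given $\mathcal F_t$ is neither Gaussian nor explicitly known, so you cannot differentiate this relative entropy or verify that ``the drift term is dominated by $\Tr((F_t-\id_E)^2)/(2(1-t))$''; that domination is precisely what needs proof. (Strict positive-definiteness of $F_t$ is also not what you use it for: it is needed later in the paper to invert $B_j\Gamma_t^2B_j^T$, not to salvage absolute continuity here.) Deferring the ``It\^o bookkeeping'' to \cite{EldanMikulincer} leaves the entire content of the lemma unproved. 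I would count this as a genuine gap: supply the explicit drift and the stochastic Fubini identity, after which Proposition \ref{prop:FollmerLowerBound} and It\^o's isometry finish the argument in three lines.
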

\begin{proof}
Define the drift 
$$u_t = \int_{0}^t \frac{F_s-\id_E}{1-s} dW_s.$$
  We claim that $W_1 + \int_{0}^1 u_t dt \sim \mu$.  To see this, write
\begin{align*}
\int_{0}^1 F_t dW_t = \int_{0}^1 \id_E dW_t  + \int_{0}^1 (F_t - \id_E) dW_t &= W_1 + \int_{0}^1 \int_t^1 \frac{F_t -  \id_E}{1-t} ds dt\\
&=W_1 + \int_{0}^1 u_s ds,
\end{align*}
where we used the stochastic Fubini theorem.  Now, by Proposition \ref{prop:FollmerLowerBound} and the data processing inequality, It\^o's isometry, and Fubini's theorem, we have
\begin{align}
D(\mu\|\gamma_E) \leq \frac{1}{2}\int_{0}^1\EE |u_t|^2dt &= \frac{1}{2}\int_{0}^1 \int_{0}^t  \frac{ \EE    \Tr\left( (  F_s   - \id_{E} )^2 \right) }{(1-s)^2}  dsdt \label{eq:ItoIsometryFs}\\
&=  \frac{1}{2} \int_0^1 \frac{  \EE    \Tr\left( (  F_s   - \id_{E} )^2 \right) }{1-s}ds .\notag
\end{align}
\end{proof}

\begin{lemma}\label{lem:GrowthEstimate}
Let $(P_t)_{t\geq 0}$ be the heat semigroup, and let $X \sim \mu \in \mathcal{P}(E)$ have density $d\mu = f d\gamma_E$.  For each $0 < t < 1$, there is a constant $C$ depending only on $t$ and the second moments of $X$ such that 
$$
|\nabla \log P_{1-t}f(x)| \leq C (|x|+1), ~~~x\in E.
$$
If, moreover, $\mu$ is of the form $\mu = \nu* \gamma_E$, then 
$$
\nabla^2 \log (P_{1-t}f(x))  \in \psd(E), ~~~x\in E, 0 < t < 1. 
$$
\end{lemma}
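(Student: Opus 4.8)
The plan is to treat the two assertions separately, both via the standard heat-semigroup (Mehler-type) representation. Write $P_{1-t}f(x) = \EE[ f(\sqrt{1-t}\, G + x) ]$ for $G\sim\gamma_E$, but it is cleaner to unfold the Gaussian weight: since $d\mu = f\, d\gamma_E$, one has
\begin{align*}
P_{1-t}f(x) = \frac{1}{(2\pi(1-t))^{\dim(E)/2}} \int_E e^{-|x-y|^2/(2(1-t))}\, e^{|y|^2/2}\, d\mu(y)\Big/ (\text{normalization}),
\end{align*}
or more simply work with the unnormalized kernel $q_{1-t}(x) := \int_E g_{1-t}(x-y)\,d\mu(y)$ where $g_s$ is the heat kernel; then $P_{1-t}f = q_{1-t}/\gamma_E$-density, and $\log P_{1-t}f(x) = \log q_{1-t}(x) + |x|^2/2 + \text{const}$. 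Differentiating under the integral sign (justified because $d\mu$ has finite second moment and $g_{1-t}$ has Gaussian tails, so all integrands are dominated uniformly for $x$ in compacts), we get $\nabla \log q_{1-t}(x) = -\frac{1}{1-t}\big(x - \EE_{\nu_x}[Y]\big)$, where $\nu_x$ is the probability measure on $E$ with density proportional to $y\mapsto g_{1-t}(x-y)$ against $d\mu(y)$, i.e. the posterior of $Y\sim\mu$ given $x = Y + \sqrt{1-t}\,(\text{noise})$.

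For the gradient bound: combining the two pieces, $\nabla \log P_{1-t}f(x) = \frac{1}{1-t}\EE_{\nu_x}[Y] - \frac{t}{1-t}x$. So it suffices to bound $|\EE_{\nu_x}[Y]|$ by $C(|x|+1)$. First I would observe the crude pointwise bound $|\EE_{\nu_x}[Y]| \le \EE_{\nu_x}[|Y|]$ and control $\EE_{\nu_x}[|Y|]$. One clean route: for any $R>0$, split the posterior expectation over $\{|y|\le R|x|+R\}$ and its complement; on the first set $|y|$ is already $\le R(|x|+1)$, while on the second set the Gaussian factor $g_{1-t}(x-y)$ is exponentially small (of order $e^{-c|y|^2}$ for $|y|$ large relative to $|x|$), and since $d\mu$ has finite second moment, $\int_{|y|>R(|x|+1)} |y|\, g_{1-t}(x-y)\, d\mu(y)$ is bounded while the normalizing denominator $q_{1-t}(x) = \int g_{1-t}(x-y)\,d\mu(y)$ is bounded below by $c_t e^{-|x|^2/c_t'}$ times a constant depending on $\mu$; a short computation shows the ratio is $O(|x|+1)$ with a constant depending only on $t$ and $\int |y|^2 d\mu$. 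This last estimate — keeping the denominator lower bound and the exponentially small numerator in balance to land exactly at linear growth — is where the bookkeeping is most delicate, and is the main obstacle.

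For the Hessian assertion under $\mu = \nu * \gamma_E$: here I would use that $\mu = \nu*\gamma_E$ means $f = P_1 f_\nu$ in an appropriate sense, so $P_{1-t}f$ is the heat flow at time $2-t > 1$ applied to (the density of) $\nu$ against Gaussian; equivalently, $q_{1-t}(x) = \int_E g_{2-t}(x - z)\, d\nu(z)$. Then $\log P_{1-t}f(x) = \log \int_E g_{2-t}(x-z)\,d\nu(z) + |x|^2/2 + \text{const}$, and the standard covariance-representation of log-derivatives of Gaussian convolutions gives $\nabla^2 \log q_{1-t}(x) = -\frac{1}{2-t}\id_E + \frac{1}{(2-t)^2}\Cov_{\tilde\nu_x}(Z)$, where $\tilde\nu_x$ is the posterior of $Z\sim\nu$ given $x$. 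Adding the $|x|^2/2$ term contributes $+\id_E$, so $\nabla^2 \log P_{1-t}f(x) = \frac{1-t}{2-t}\id_E + \frac{1}{(2-t)^2}\Cov_{\tilde\nu_x}(Z)$, which is a sum of a positive multiple of the identity (since $0<t<1$) and a covariance matrix, hence positive semidefinite. I would remark that differentiation under the integral is again justified by the finite second moment of $\nu$ (inherited from that of $\mu$). This completes the proof; I would present it as a short paragraph citing the Mehler representation and the standard fact that $\nabla^2 \log$ of a Gaussian convolution equals a (scaled) posterior covariance minus a scalar.
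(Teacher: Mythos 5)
Your overall strategy (rewrite $P_{1-t}f$ as an explicit Gaussian prefactor times a heat-smoothed density, then use the posterior-mean and posterior-covariance representations of $\nabla\log$ and $\nabla^2\log$ of a Gaussian convolution) is essentially the paper's, but the identity on which you base everything is wrong, and the error is not cosmetic. Writing $\rho$ for the Lebesgue density of $\mu$ and completing the square in your first display gives
$$
P_{1-t}f(x) \;=\; \Big(\tfrac{2\pi}{t}\Big)^{\dim(E)/2} e^{|x|^2/(2t)}\,\big(P_{(1-t)/t}\rho\big)(x/t),
$$
not $e^{|x|^2/2}\,q_{1-t}(x)$ with $q_{1-t}=\rho* g_{1-t}$: the Gaussian prefactor is $e^{|x|^2/(2t)}$, the convolution runs for time $(1-t)/t$, and it is evaluated at $x/t$. (Equivalently, $\varphi_t\cdot P_{1-t}f$ is the density of $tX+\sqrt{t(1-t)}Z$, where $\varphi_t$ is the $N(0,t\,\id_E)$ density --- it is not the density of $X+\sqrt{1-t}\,Z$ divided by $\gamma_E$.) All your subsequent formulas inherit wrong coefficients, and in the Hessian part this is decisive: your conclusion $\nabla^2\log P_{1-t}f\succeq \frac{1-t}{2-t}\id_E\succ 0$ is falsified by $\nu=\delta_0$, $\mu=\gamma_E$, for which $f\equiv 1$ and the Hessian is identically zero. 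With the corrected identity the two identity contributions cancel exactly: using $\rho=P_1\rho_0$ and the posterior-covariance bound $\nabla^2\log P_s g\geq -\frac1s\id_E$ (this is \cite[Lemma 1.3]{EldanLee}, which is precisely the ``covariance minus scalar'' fact you invoke), one gets $\nabla^2\log P_{1-t}f(x) = \frac1t\id_E+\frac1{t^2}\,\nabla^2(\log P_{1/t}\rho_0)(x/t)\geq \frac1t\id_E-\frac1{t^2}\cdot t\,\id_E=0$, which also explains why the lemma asserts only $\psd(E)$ and not strict positivity.

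For the gradient bound, once the identity is fixed your plan reduces to showing $|\nabla\log P_s\rho(y)|\leq c_s(|y|+1)$ for the Lebesgue density $\rho$, with $c_s$ depending only on $s$ and the second moments; the splitting/tail estimate you sketch and correctly flag as ``the main obstacle'' is exactly the content of \cite[Proposition 2]{PolyanksiyWu}, which the paper simply cites at this point rather than reproving. So the proposal is repairable and follows the same route as the paper, but as written the central identity --- and hence both displayed conclusions --- is incorrect.
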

\begin{proof}
Let $\rho$ denote the density of $X$ with respect to Lebesgue measure on $E$.  By direct calculation, we can reparametrize $P_{1-t}f$ in terms of $\rho$ as 
$$
P_{1-t} f(x) = \left( \frac{2\pi}{t}\right)^{\dim(E)/2}e^{\frac{|x|^2}{2t}} P_{\frac{1-t}{t}}   \rho (x/t)
$$
Hence, 
\begin{align}
\nabla  \log P_{1-t}f(x) =  \frac{1}{t}x + \frac{1}{t}\nabla (\log P_{\frac{1-t}{t}}   \rho)(x/t).  \label{eq:ScoreFn}
\end{align}
Regularity estimates for evolution of densities under $(P_t)_{t\geq 0}$ imply 
$$
|\nabla \log P_s \rho(x)| \leq c_s (|x|+1), ~~~ s>0
$$ 
for some finite constant $c_s$ depending only on $s$ and the second moments of $\rho$ (see, e.g., \cite[Proposition 2]{PolyanksiyWu}).  Hence, the first claim follows.   %

For the second claim, we have $\rho = P_{1}\rho_0$ for some density $\rho_0$.  Hence, by the semigroup property combined with \eqref{eq:ScoreFn}, we have
$$
\nabla^2  \log P_{1-t}f(x) =  \frac{1}{t} \id_E + \frac{1}{t^2}\nabla^2 (\log P_{\frac{1}{t}}   \rho_0)(x/t). 
$$
By a simple convexity calculation \cite[Lemma 1.3]{EldanLee}, it holds that $\nabla^2 (\log P_s g)\geq -\frac{1}{s}\id_E$ for any density $g$ and $s>0$, so 
we find 
$$
\nabla^2  \log P_{1-t}f(x) \geq \left(\frac{1}{t} - \frac{1}{t}\right)\id_E  =0 .
$$
\end{proof}

\begin{proof}[Necessity of conditions (i)-(ii) in Theorem \ref{thm:AJNrigidity}]
Let $\mu_i \in \mathcal{P}(E_i)$, $1\leq i \leq k$ satisfy
\begin{align}
\sum_{i=1}^k c_i D(\mu_i \| \gamma_{E_i}) &= \sum_{j=1}^m d_j D(B_j \sharp (\mu_1\otimes \cdots \otimes \mu_k) \| \gamma_{E^j})
\label{eq:AJNsaturation}
\end{align} 
under the prevailing assumption of AJN-geometricity; this is the same as equality in \eqref{hIneqAJNGeom}.   Without loss of generality, we can assume each $\mu_i$ is centered.   Moreover, since the extremizers of the AJN inequality are closed under convolutions (Proposition \ref{prop:AJNextremizersClosedUnderConvolutions}) and standard Gaussians are extremal in the geometric AJN inequality (Proposition \ref{prop:AJNGeomCgEquals0}), we can assume without loss of generality that each $\mu_i$ is of the form 
\begin{align}
\mu_i = \tilde{\mu}_i * \gamma_{E_i}\label{eq:regularizedMus}
\end{align}
for some extremal $\tilde{\mu}_i \in \mathcal{P}(E_i)$, $1\leq i \leq k$.   Indeed, $X \sim \otimes_{i=1}^k \mu_i$ satisfies (i)-(ii) if and only if $X+Z$ satisfies (i)-(ii) for $Z\sim \gamma_{E_0}$, independent of $X$. 

\vskip1ex

\noindent{\bf Necessity of condition (i):}  In the proof of Proposition \ref{prop:AJNGeomCgEquals0}, the sole inequality is \eqref{eq:useFollmer2}.  Hence, properties of the drift $u_t$ warrant a closer inspection; we follow the approach developed in  \cite{EldanMikulincer}.  Toward this end, let $f$ denote the density of $\mu_1\otimes \cdots \otimes \mu_k$ with respect to $\gamma_{E_0}$, and define the function
$$
u_t(x) := \nabla \log P_{1-t} f(x),~~~~x\in E_0, ~0 \leq t \leq 1,
$$
where $(P_t)_{t\geq 0}$ denotes the heat semigroup.  Define the matrix-valued function
\begin{align}
\Gamma_t(x)  := (1-t)\nabla u_t(x) +\id_{E_0}, ~~~~x\in E_0, ~0 \leq t \leq 1, \label{def:Gammat}
\end{align}
which, for each $0\leq t \leq 1$, takes the block-diagonal form $\Gamma_t = \diag(\Gamma^1_t , \dots, \Gamma^k_t)$ with $\Gamma_t^i\in \pd(E_i)$ due to the product form of the density $f$ and Lemma \ref{lem:GrowthEstimate} applied to \eqref{eq:regularizedMus}.  

Now, consider the Wiener space of continuous functions $\mathbb{W} =  \{\omega: [0,1] \to E_0;~ \omega(0) = 0\}$, equipped with the uniform norm, the Borel sets $\mathcal{B}$, and the Wiener measure $\gamma$.   Let $X_t(\omega) = \omega(t)$ be the coordinate process, and $\mathcal{F} = (\mathcal{F}_t)_{ 0\leq t \leq 1}$ be the natural filtration of $(X_t)_{0\leq t \leq 1}$.   We'll work on the filtered probability space $(\mathbb{W}, \mathcal{B},\nu, \mathcal{F})$, where $\nu$ is the Brownian bridge
$$
\frac{d\nu}{d\gamma}(\omega) := f(\omega(1)), ~~~\omega \in \mathbb{W}.
$$
By the representation theorem for Brownian bridges, we have
\begin{align}
X_t \overset{\text{law}}{=} t X + \sqrt{t(1-t)}Z, \label{eq:AJNbridge}
\end{align}
where $X \sim \mu_1\otimes \cdots \otimes \mu_k$ and $Z\sim \gamma_{E_0}$ are independent.   Writing $u_t \equiv u_t(X_t)$, the classical de Bruijn identity, parametrized with respect to the bridge \eqref{eq:AJNbridge}, gives 
\begin{align}
D(\mu_i\|\gamma_{E_i}) = \frac{1}{2} \int_{0}^1\EE | \pi_{E_i}(u_t) |^2 dt, ~~~1\leq i\leq k, \label{eq:entEqualBridge}
\end{align}
where the expectation is with respect to $\nu$.   Moreover, if we define the $\mathcal{F}_t$-adapted process $(W_t)_{0\leq t \leq 1}$ by the equation
\begin{align}
W_t := X_t - \int_{0}^t u_s(X_s) ds, ~~~~0\leq t\leq 1,\label{eq:SDEx}
\end{align}
then $(W_t)_{0\leq t \leq 1}$ is a Brownian motion  by an application of Girsanov's theorem \cite{lehec2013representation,EldanLee}.  %
Using the SDE \eqref{eq:SDEx} and the heat equation
$$\partial_t P_{1-t} f(x) = - \frac{1}{2}\Delta P_{1-t} f(x),$$ 
we can apply It\^o's formula to $u_t$ to reveal the relationship
$$
du_t = \nabla u_t dW_t = \frac{\Gamma_t - \id_{E_0}}{1-t}dW_t,
$$
with $\Gamma_t\equiv \Gamma_t(X_t)$.  Rearranging and integrating gives
 \begin{align}
\int_{0}^1 \Gamma_t dW_t = W_1 + \int_{0}^1 u_t  dt  \sim  \mu_1\otimes \cdots \otimes \mu_k.\label{GammaDist}
\end{align}
 In particular, equality in \eqref{eq:entEqualBridge} together with the computation in \eqref{eq:ItoIsometryFs} gives the following representation for the entropies in terms of the $\Gamma_t^i$ processes:
\begin{align}
D(\mu_i \| \gamma_{E_i}) &= \frac{1}{2} \int_0^1 \frac{  \EE  \Tr\left( ( \Gamma^i_t   - \id_{E_i})^2 \right)  }{1-t}dt, ~~~1\leq i\leq k.\label{eq:entEqualBridge2}
\end{align}

Next,   positive-definiteness of $\Gamma_t$ and the assumption that $B_j B_j^T = \id_{E^j}$ together justify the definition of a new process $(\tilde{W}^j_t)_{0\leq t\leq 1}$ via
$$
d\tilde{W}^j_t = (B_j \Gamma_t^2 B_j^T)^{-1/2}  B_j \Gamma_t dW_t, ~~~1\leq j \leq m.
$$
By L\'evy's characterization, this process is a Brownian motion, since it has quadratic covariation
$$
[\tilde{W}^j]_t = \int_{0}^t (B_j \Gamma_s^2 B_j^T)^{-1/2}  B_j \Gamma_s^2 B_j^T  (B_j \Gamma_s^2 B_j^T)^{-1/2} ds = t \id_{E^j}.
$$
Putting things together, observe that definitions and \eqref{GammaDist} give
$$
 \int_0^1  (B_j \Gamma_t^2 B_j^T)^{1/2}   d\tilde{W}^j_t  = B_j \int_0^1    \Gamma_t dW_t \sim B_j \sharp (\mu_1\otimes \cdots \otimes \mu_k).
$$
Thus, by  \eqref{eq:entEqualBridge2} and an application of Lemmas \ref{lem:AJNrigidMatrix} and \ref{lem:EntUpperBoundGamma}, we have
\begin{align*}
\sum_{i=1}^k c_i D(\mu_i \| \gamma_{E_i}) &= \frac{1}{2} \int_0^1 \frac{\sum_{i=1}^k c_i \EE  \Tr\left( ( \Gamma^i_t   - \id_{E_i})^2 \right)  }{1-t}dt \\
&\geq \frac{1}{2} \int_0^1 \frac{\sum_{j=1}^m d_j  \EE    \Tr\left( (  (B_j \Gamma_t^2 B_j^T)^{1/2}   - \id_{E^j} )^2 \right) }{1-t}dt\\
&\geq \sum_{j=1}^m d_j D(B_j \sharp (\mu_1\otimes \cdots \otimes \mu_k) \| \gamma_{E^j}) .
\end{align*} 
We have equality throughout due to \eqref{eq:AJNsaturation}.   Since $X_t$ has full support for each $0 < t \leq 1$ and $(t,x)\mapsto \Gamma_t(x)$ is smooth  by the regularizing properties of the heat semigroup, Lemma \ref{lem:AJNrigidMatrix} and the above equality implies that  
\begin{align}
(\id_{E_0}- \Pi_{E^j})\Gamma_t(x) \Pi_{E^j} =0, ~~x\in E_0, ~0< t<1,~1\leq j \leq m. \label{eq:GammaTEj}
\end{align}
By definition, this implies that, for each $t\in (0,1)$, we have
\begin{align*}
(\id_{E_0}- \Pi_{E^j})\ \nabla^2 \log P_{1-t} f(x)  \Pi_{E^j}  =0, ~~x\in E_0,  ~1\leq j \leq m.
\end{align*}
Since $f$ is assumed regular by virtue of \eqref{eq:regularizedMus}, the above also holds for $t=1$ by continuity of the derivatives of the heat semigroup.  Since $f = \prod_{i=1}^k f_i$ by definition, where each $f_i$ is a density on $E_i$ with respect to $\gamma_{E_i}$, the above imposes a block-diagonal structure on the Hessian of $\log f_i$, which can be summarized as 
$$
D^2(\log f_i)(x,y) = 0,  
$$
whenever $x,y$ are vectors from distinct spaces in the decomposition \eqref{eq:decomposeEi}.  This implies, for each $1\leq i \leq k$, that  the density $f_i$  has product form
\begin{align}
 f_i(x) =   \prod_{\ell=0}^{n_i} \ f_{i,{\ell}}(\Pi_{K_{\ell}^i}(x)  ), ~~x\in E_i, \label{eq:fFactors}
 \end{align}
establishing necessity of (i).

\begin{remark}
The above proof can be viewed as a modification of Eldan and Mikulincer's argument for bounding the deficit in the Shannon--Stam inequality  \cite{EldanMikulincer}, suitable for setting of the AJN inequality.  The emergence of the  factorization \eqref{eq:fFactors} is new, and results from AJN-geometricity via the matrix inequality in Lemma \ref{lem:AJNrigidMatrix}.  Although Valdimarsson's arguments in the context of the functional Brascamp--Lieb inequalities are slightly different, the same basic factorization emerges   in \cite[Lemma 13]{valdimarsson08}.  Hence, the above might be regarded as a combination of ideas from both \cite{EldanMikulincer} and \cite{valdimarsson08}.  In the next step, the Fourier analytic argument is effectively the same as that found in \cite[Lemma 14]{valdimarsson08}, with the drift $u_t$ playing the role of what Valdimarsson calls  $\nabla \log F$. 
\end{remark}

\vskip1ex

\noindent{\bf Necessity of condition (ii):}   Having established necessity of (i), the initial calculations in the proof of sufficiency hold, leading to the conclusion \eqref{eq:OnlyDepSubspaceLeft}.  The reduced datum $(\mathbf{c},\mathbf{d},\mathbf{B}_{K_{dep}})$ obtained by restricting the maps in $\mathbf{B}$ to domain $K_{dep}$ remains AJN-geometric, so without loss of generality, we can assume for simplicity that there are no independent subspaces henceforth; i.e., $K_{dep} \equiv E_0$.  As in the previous step, we let $f$ denote the density of $X\sim \mu_1\otimes \cdots \otimes \mu_k$ with respect to $\gamma_{E_0}$. 

Letting definitions from the previous step prevail, Lemma \ref{lem:GrowthEstimate} implies  that   $u_t$ has linear growth in $x$ for each $0 < t < 1$.  Hence, we are justified in taking the Fourier transform, which we denote by $\hat{u}_t$. By \eqref{eq:fFactors}, $u_t$ is additively separable in the variables $\Pi_{E_j} x$ and $(\id_{E_0}-\Pi_{E_j})x$, and therefore  $\hat{u}_t$ is supported on $H^j \cup (H^j)^{\perp}$ for each $1\leq j \leq m$ (where $H^j$ denotes the complex Hilbert space $E^j + \mathbf{i} E^j$).  Similarly, since $u_t$ is  additively separable in the variables $\pi_{E_1}(x), \dots, \pi_{E_k}(x)$, it follows that $\hat{u}_t$ is supported on $\cup_{i=1}^k H_i$ (where,  $H_i := E_i +  \mathbf{i} E_i$).  Taking intersections, we find $\hat{u}_t$ is supported on the set %
$$
 (H_1 \cup \cdots \cup H_k) \cap \bigcap_{j=1}^m (H^j \cup (H^j)^{\perp}) = \{0\},
$$
where the equality follows by the assumption that there are no independent subspaces.   A tempered distribution  with Fourier transform supported at the origin is   a polynomial \cite[p.~194]{Rudin91}, so the linear growth estimate in Lemma   \ref{lem:GrowthEstimate} implies that $x\mapsto u_t(x)$ is affine for each $0 < t < 1$.  As a consequence of its defnition, $\Gamma_t$ is therefore deterministic for each $0 < t < 1$, in the sense that $\Gamma_t(x)$ does not depend on $x$.     Using the It\^o isometry, we  conclude from the representation $\int_{0}^1 \Gamma_t d W_t \overset{\text{law}}{=} X$  that $X$ is Gaussian with covariance
$$
\Sigma := \ocov(X  ) = \int_{0}^1  (\Gamma_t)^2  dt.
$$
Note that $\Sigma$ has diagonal form
\begin{align}
\Sigma = \Pi(\Sigma_1, \dots, \Sigma_k), ~~~\Sigma_i \in \psd(E_i), 1\leq i \leq k  \label{eq:SigmaDiagonalForm}
\end{align}
due to independence of the coordinates of $X$.  

From \eqref{eq:GammaTEj}, we have   $\Pi_{E^j}  \Sigma = \Pi_{E^j} \Sigma \Pi_{E^j}$ for each $1\leq j\leq m$.    This implies that if $v\in E_0$ is an eigenvector of $\Sigma$ with eigenvalue $\lambda$, then $\Pi_{E^j} v$ is an eigenvector of $\Pi_{E^j} \Sigma \Pi_{E^j}$ with eigenvalue $\lambda$.  In particular, if we consider the spectral decomposition $\Sigma = \sigma_1^2 \Pi_{K^{1}_{dep}} + \cdots \sigma_n^2 \Pi_{K^{n}_{dep}} $ with $\sigma^2_{1}, \dots, \sigma^2_{n}$ distinct, then we have the orthogonal decomposition
\begin{align}
B_j E_0 = \oplus_{\ell=1}^n B_j K^{\ell}_{dep},\hspace{5mm}1\leq j\leq m, \label{BjKdepDecomp2}
\end{align}
where we note each $K^{\ell}_{dep}$ is product-form due to \eqref{eq:SigmaDiagonalForm}. To see that $E_0 =  \oplus_{\ell=1}^n  K^{\ell}_{dep}$ is a critical decomposition,  observe that
\begin{align}
\sum_{i=1}^k c_i h(X_i) &=\sum_{j=1}^m d_j h(B_j  X) \label{eq:extAssump}\\
&= \sum_{\ell=1}^n  \frac{1}{2} \log(2\pi e \sigma_{\ell}^2) \sum_{j=1}^m d_j \dim(B_j K_{dep}^{\ell}) \label{eq:EigenDecomp}\\
&\geq \sum_{\ell=1}^n  \frac{1}{2} \log(2\pi e \sigma_{\ell}^2) \sum_{i=1}^k c_i \dim(\pi_{E_i} K_{dep}^{\ell})\label{eq:applyDimCondAJN}\\
&=  \sum_{i=1}^k c_i \sum_{\ell=1}^n  \frac{\dim(\pi_{E_i} K_{dep}^{\ell})}{2} \log(2\pi e \sigma_{\ell}^2) =  \sum_{i=1}^k c_i h( X_i), \label{eq:DecompKdepi}
\end{align}
where \eqref{eq:extAssump} is the extremality assumption; \eqref{eq:EigenDecomp} is due to \eqref{BjKdepDecomp2} and the spectral decomposition of $\Sigma$; \eqref{eq:applyDimCondAJN} is the dimension condition \eqref{eq:DimCond}; and  \eqref{eq:DecompKdepi} follows due to the orthogonal decomposition $E_i = \oplus_{\ell=1}^n  \pi_{E_i} K_{dep}^{\ell}$ for each $1\leq i\leq k$, because each $K_{dep}^{\ell}$ is of product-form.  Since we have equality throughout, this implies $K_{dep}\equiv E_0 = \oplus_{\ell=1}^n K_{dep}^{\ell}$ is a critical decomposition, as desired.  Since  $K_{dep}^{1}, \dots, K_{dep}^{n}$ are eigenspaces of $\Sigma$, (ii) holds.  
\end{proof}

\subsection*{Acknowledgement} 
T.C.~thanks Dan Mikulincer for his  explanations of the  properties of the F\"ollmer drift and the martingale embedding used in  \cite{EldanMikulincer}.  This work was supported in part by NSF grant CCF-1750430 (CAREER).

\appendix

\section{F\"ollmer's drift}\label{sec:FollmersDrift}

The material in this appendix can be found in  \cite{lehec2013representation}, and interested readers are referred there for more details.  We summarize the required results for completeness, since it plays an important role in the proofs of Proposition \ref{prop:AJNGeomCgEquals0} and Theorem \ref{thm:AJNrigidity}.   

 For a Euclidean space $E$, let $\mathbb{W}$ denote the classical Wiener space of continuous functions $C^0([0,1], E):= \{\omega: [0,1] \to E; \omega(0) = 0\}$ equipped with the topology of uniform convergence, and the Borel $\sigma$-algebra $\mathcal{B}$. Let $\gamma$ denote the Wiener measure on $(\mathbb{W}, \mathcal{B})$.   Let $X_t : \omega \mapsto \omega(t)$ be the coordinate process, and $\mathcal{G} = (\mathcal{G}_t)_{ 0\leq t \leq 1}$ be the natural filtration of $X = (X_t)_{0\leq t \leq 1}$.   It is a fact that $\mathcal{B}$ is the $\sigma$-algebra generated by $\mathcal{G}$.

Given a filtered probability space $(\Omega, \mathcal{A}, \mathbb{P}, \mathcal{F})$, where $\mathcal{A}$ is the Borel $\sigma$-algebra of a Polish topology on $\Omega$, a {\bf drift} is   any adapted process $U : [0,1] \to E$ such that there exists $u \in L^1([0,1];E)$ satisfying 
$$
U_t = \int_{0}^{t} u_s ds, ~~0\leq t\leq 1
$$
and $\int_{0}^{1} |u_s|^2 ds <\infty$ almost surely.   By definition and Cauchy--Schwarz, any drift $U$ belongs to $\mathbb{W}$ almost surely.

A process $B = (B_t)_{t\geq 0}$ taking values in $E$ is said to be a {\bf standard Brownian motion} if it is a Brownian motion with $B_0 =0$ and $\ocov(B_1) = \id_E$.   The following is a consequence of Girsanov's theorem; it can be found in \cite[Proposition 1]{lehec2013representation}.
\begin{proposition}\label{prop:FollmerLowerBound}
Let a standard   Brownian motion $B$, taking values in $E$, be defined on a filtered probability space  $(\Omega, \mathcal{A}, \mathbb{P}, \mathcal{F})$, and let $U_t = \int_{0}^{t} u_s ds$ be a drift.  If $\nu$ is the law of the process $(B_t+  U_t)_{0\leq t\leq 1}$, then
$$
D(\nu\|\gamma)\leq \frac{1}{2}\int_{0}^{1} \EE |  u_s |^2 ds.
$$
\end{proposition}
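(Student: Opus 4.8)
The plan is to obtain the bound from Girsanov's theorem combined with the data-processing (monotonicity) property of relative entropy, as in \cite[Proposition 1]{lehec2013representation}. Since the inequality is trivial when $\int_0^1 \EE|u_s|^2\,ds = \infty$, we may assume this quantity is finite. Consider the Dol\'eans--Dade exponential
$$
\mathcal{E}_t := \exp\left( -\int_0^t \langle u_s, dB_s\rangle - \tfrac12 \int_0^t |u_s|^2\, ds\right), \qquad 0\le t\le 1,
$$
a nonnegative local martingale. The first (and, as it turns out, only substantive) step is to argue that $\mathcal{E}$ is actually a true martingale on $[0,1]$, so that $\EE[\mathcal{E}_1]=1$ and one may define a probability measure $\mathbb{Q}$ on $(\Omega,\mathcal{A})$ by $d\mathbb{Q}/d\mathbb{P} = \mathcal{E}_1$; note $\mathbb{Q}\sim\mathbb{P}$ since $\mathcal{E}_1>0$ a.s. Under $\mathbb{Q}$, Girsanov's theorem asserts that $Y_t := B_t + U_t$ is a standard Brownian motion.

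With $\mathbb{Q}$ in hand, let $\Phi:\Omega\to\mathbb{W}$ be the measurable map $\omega\mapsto (Y_t(\omega))_{0\le t\le 1}$. By the definition of $\nu$ we have $\Phi_*\mathbb{P} = \nu$, while the Girsanov conclusion gives $\Phi_*\mathbb{Q} = \gamma$. The data-processing inequality for relative entropy (valid for arbitrary pushforwards) then yields
$$
D(\nu\|\gamma) = D(\Phi_*\mathbb{P}\,\|\,\Phi_*\mathbb{Q}) \le D(\mathbb{P}\|\mathbb{Q}).
$$
It remains to evaluate the right-hand side. Since $d\mathbb{P}/d\mathbb{Q} = \mathcal{E}_1^{-1}$, we get
$$
D(\mathbb{P}\|\mathbb{Q}) = \EE\left[ -\log \mathcal{E}_1\right] = \EE\left[ \int_0^1 \langle u_s, dB_s\rangle + \tfrac12\int_0^1 |u_s|^2\,ds\right].
$$
The assumption $\int_0^1 \EE|u_s|^2\,ds<\infty$ makes $\bigl( \int_0^t \langle u_s, dB_s\rangle\bigr)_{0\le t\le 1}$ a genuine square-integrable martingale, so its terminal value has mean zero; hence $D(\mathbb{P}\|\mathbb{Q}) = \tfrac12\int_0^1 \EE|u_s|^2\,ds$, which is the claimed bound.

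The main obstacle is the preliminary step: verifying $\EE[\mathcal{E}_1]=1$. The hypothesis $\int_0^1\EE|u_s|^2\,ds<\infty$ does not imply Novikov's condition $\EE\exp\bigl(\tfrac12\int_0^1|u_s|^2\,ds\bigr)<\infty$, so Novikov's criterion cannot be quoted directly. I would handle this by localization: set $T_n := \inf\{t\in[0,1] : \int_0^t |u_s|^2\,ds \ge n\}$ (with $\inf\emptyset = 1$) and apply the argument above to the bounded-energy drifts $u^n := u\,\mathbf{1}_{[0,T_n]}$, for which the corresponding exponential $\mathcal{E}^n$ is a uniformly integrable martingale and the bound $D(\nu_n\|\gamma)\le \tfrac12\EE\int_0^{T_n}|u_s|^2\,ds \le \tfrac12\int_0^1\EE|u_s|^2\,ds$ holds, where $\nu_n$ denotes the law of $(B_t + \int_0^{t\wedge T_n} u_s\,ds)_{0\le t\le 1}$; then let $n\to\infty$, using $\nu_n\to\nu$ (e.g.\ in total variation, since $T_n\uparrow 1$ a.s.) and lower semicontinuity of $\zeta\mapsto D(\zeta\|\gamma)$. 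Alternatively, one can sidestep the positivity issue entirely by invoking the Gibbs variational principle for $D(\nu\|\gamma)$ and pairing it against the supermartingale $\mathcal{E}$. Apart from this technical point, everything is a direct change-of-measure computation, and the reader is referred to \cite{lehec2013representation} for full details.
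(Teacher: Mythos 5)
The paper does not actually prove this proposition; it is imported verbatim from \cite[Proposition 1]{lehec2013representation}, so there is no in-paper argument to compare against. Your proof is correct and is the standard route (Girsanov plus the data-processing inequality), which is also the route of the cited source; the only substantive issue is the failure of Novikov's condition, and your localization via $T_n$ handles it properly --- just note that the reason $\nu_n\to\nu$ in total variation is not merely $T_n\uparrow 1$ but the fact that $\Pb(T_n=1)\to 1$, which follows from $\int_0^1|u_s|^2\,ds<\infty$ a.s.\ (part of the definition of a drift), so that the stopped and unstopped processes coincide off an event of vanishing probability.
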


It turns out that the upper bound given on the relative entropy above can be met with equality.  The result is due to F\"ollmer \cite{Follmer84, Follmer86}; the statement given can be found in \cite[Theorem 2]{lehec2013representation}.
\begin{proposition}[F\"ollmer's drift]\label{prop:Follmer}
Let $\nu\ll \gamma$ be a probability measure on $(\mathbb{W}, \mathcal{B})$ with $D(\nu\|\gamma)<\infty$.  There exists an adapted process $u$ such that, under $\nu$, the following holds:
\begin{enumerate}
\item The process $U_t = \int_{0}^{t} u_s ds$   is a drift. 
\item The process $B_t = X_t- U_t$  is standard Brownian motion.
\item We have $D(\nu\|\gamma) =  \frac{1}{2} \int_{0}^{1} \EE_{\nu} |  u_s |^2 ds$.
\end{enumerate}
\end{proposition}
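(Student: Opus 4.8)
The plan is classical, going back to F\"ollmer \cite{Follmer84, Follmer86} (see also \cite{lehec2013representation}): realize the density $d\nu/d\gamma$ as the terminal value of an exponential martingale, read the drift off the integrand in its stochastic-integral representation, and then invoke Girsanov's theorem to identify the Brownian motion under $\nu$. The entropy identity is obtained afterwards by evaluating the logarithm of the density.

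In detail, set $f := d\nu/d\gamma$ and $M_t := \EE_\gamma[f \mid \mathcal{G}_t]$, a nonnegative $\gamma$-martingale with $M_0 = 1$ and $M_1 = f$, which is moreover uniformly integrable since $f\in L^1(\gamma)$. Because $\mathcal{B}$ is generated by $\mathcal{G}$ and $X$ is a Brownian motion under $\gamma$, the martingale representation theorem yields a predictable $E$-valued process $\sigma$ with $\int_0^1 |\sigma_s|^2\, ds < \infty$ $\gamma$-a.s. such that $M_t = 1 + \int_0^t \sigma_s\, dX_s$. A nonnegative martingale is absorbed once it hits $0$, while $M_1 = f > 0$ holds $\nu$-a.s.; hence $M_t > 0$ for all $t \in [0,1]$, $\nu$-a.s., so the adapted process $u_t := \sigma_t / M_t$ is well defined $\nu$-a.s., and, $M$ being continuous and strictly positive on $[0,1]$ under $\nu$, $\int_0^1 |u_s|^2\, ds = \int_0^1 |\sigma_s|^2 M_s^{-2}\, ds < \infty$ $\nu$-a.s., so $U_t := \int_0^t u_s\, ds$ is a drift. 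By construction $dM_t = \sigma_t\, dX_t = M_t\, u_t\, dX_t$, so $M$ is the Dol\'eans--Dade exponential $M_t = \exp\!\left(\int_0^t u_s\, dX_s - \tfrac12 \int_0^t |u_s|^2\, ds\right)$. Since $M_t$ is exactly the density of $\nu$ with respect to $\gamma$ on $\mathcal{G}_t$, Girsanov's theorem shows that under $\nu$ the process $B_t := X_t - U_t$ is a continuous local martingale with quadratic variation $t\,\id_E$, hence a standard Brownian motion by L\'evy's characterization. This establishes items 1 and 2.

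For item 3, take logarithms in the exponential representation at $t = 1$ and substitute $dX_s = dB_s + u_s\, ds$, valid $\nu$-a.s., to get
$$
\log f = \log M_1 = \int_0^1 u_s\, dB_s + \tfrac12 \int_0^1 |u_s|^2\, ds .
$$
Taking $\EE_\nu$ and using $D(\nu\|\gamma) = \EE_\nu[\log f]$, the identity $D(\nu\|\gamma) = \tfrac12 \EE_\nu \int_0^1 |u_s|^2\, ds$ follows \emph{provided} $\EE_\nu \int_0^1 u_s\, dB_s = 0$. This is immediate when $f$ is bounded above and bounded away from $0$, since then $M$ and hence $u$ are bounded and $\int_0^\cdot u_s\, dB_s$ is a genuine $\nu$-martingale; this already settles a class of target measures dense for the relevant topology.

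The main obstacle is upgrading the identity to an arbitrary $\nu$ with $D(\nu\|\gamma)<\infty$, where $\int_0^\cdot u_s\, dB_s$ is only a $\nu$-local martingale. I would localize along $\tau_n := \inf\{ t : \int_0^t |u_s|^2\, ds \ge n\} \wedge 1$, on which the stochastic integral is a true martingale, obtaining $\EE_\nu[\log M_{\tau_n}] = \tfrac12 \EE_\nu \int_0^{\tau_n} |u_s|^2\, ds$; as $n\to\infty$ the right-hand side increases to $\tfrac12 \EE_\nu \int_0^1 |u_s|^2\, ds$ by monotone convergence, and $M_{\tau_n} \to M_1$ $\nu$-a.s. by path-continuity. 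The delicate point is controlling $\EE_\nu[\log M_{\tau_n}]$ in the limit; here Proposition \ref{prop:FollmerLowerBound}, applied to the drift $U$ just constructed, supplies the complementary inequality $\tfrac12 \EE_\nu \int_0^1 |u_s|^2\, ds \ge D(\nu\|\gamma)$, so one only needs to exclude strict inequality. This can be done either by a uniform-integrability argument for the family $(\log M_{\tau_n})_n$ exploiting finiteness of $\EE_\nu[\log f]$, or by approximating $\nu$ by measures with nice densities, applying the bounded-case identity, and passing to the limit using lower semicontinuity of relative entropy together with Proposition \ref{prop:FollmerLowerBound}. This limiting/integrability step, rather than any single computation, is where the real work lies; the remaining points — adaptedness of $u$, the stochastic-Fubini manipulation behind $dX = dB + u\, ds$, and the verification that $M_t$ is the Girsanov density — are routine.
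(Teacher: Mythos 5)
The paper does not actually prove this proposition: it is quoted verbatim from Lehec \cite{lehec2013representation} (Theorem 2 there, originally due to F\"ollmer), so there is no in-paper argument to compare against. Your construction is the standard one used in those references: condition the density to get the martingale $M_t=\EE_\gamma[f\mid\mathcal G_t]$, apply martingale representation, set $u_t=\sigma_t/M_t$, and invoke Girsanov/L\'evy. Items 1 and 2 are handled correctly and completely (including the point that $M$ stays strictly positive $\nu$-a.s.\ because a nonnegative martingale is absorbed at zero and $f>0$ $\nu$-a.s.).

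The one genuine gap is item 3, and you have located it accurately: you prove the identity only for bounded densities and then offer two candidate strategies for the general case without carrying either out. For the record, your localization route does close, and more easily than you suggest. With $\tau_n=\inf\{t:\int_0^t|u_s|^2ds\ge n\}\wedge 1$ you get $\EE_\nu[\log M_{\tau_n}]=\tfrac12\EE_\nu\int_0^{\tau_n}|u_s|^2ds$, and the left side needs no uniform-integrability analysis: by optional stopping for the uniformly integrable $\gamma$-martingale $M$, the density of $\nu$ with respect to $\gamma$ on $\mathcal G_{\tau_n}$ is exactly $M_{\tau_n}$, so
$$
\EE_\nu[\log M_{\tau_n}]=\EE_\gamma[M_{\tau_n}\log M_{\tau_n}]=D\bigl(\nu|_{\mathcal G_{\tau_n}}\,\big\|\,\gamma|_{\mathcal G_{\tau_n}}\bigr)\le D(\nu\|\gamma),
$$
the last step being monotonicity of relative entropy under restriction to a sub-$\sigma$-algebra. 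Monotone convergence on the right-hand side then gives $\tfrac12\EE_\nu\int_0^1|u_s|^2ds\le D(\nu\|\gamma)$, and Proposition \ref{prop:FollmerLowerBound} supplies the reverse inequality, exactly as you anticipated. So the architecture of your proof is sound and matches the cited source; what is missing is this last half page of bookkeeping, which should be written out rather than left as a choice between two unexecuted strategies.
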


Let $\mu \in \mathcal{P}(E)$ have density $d\nu = fd\gamma_E$.  By defining the Brownian bridge $\nu$   on $(\mathbb{W}, \mathcal{B})$ via
\begin{align}
\frac{d\nu}{d\gamma}(\omega) =f(\omega(1)), \hspace{5mm} \omega\in \mathbb{W}, \label{eq:bridgeConstruction}
\end{align}
we have $D(\mu\|\gamma_E) =  D(\nu \|\gamma)$, which follows by data processing and the observation that $X_1\sim \gamma_E$ under $\gamma$. This gives the following convenient representation  for the entropy.  For $\mu \ll \gamma_E$ with  $D(\mu\|\gamma_E) <\infty$, let $\nu$ be the bridge in \eqref{eq:bridgeConstruction}.  On the  filtered probability space $(\mathbb{W}, \mathcal{B}, \nu, \mathcal{G})$, we have
\begin{align}
D(\mu \|\gamma_E) = \min_U \frac{1}{2} \int_{0}^{1} \EE |  u_s |^2 ds, \label{FollmerMinEnergy}
\end{align}
where the minimum is over all drifts $U_t = \int_{0}^{t} u_s ds$ such that   $\mu \sim B_1 + U_1$ for a standard Brownian motion $B$ carried by  $(\mathbb{W}, \mathcal{B}, \nu, \mathcal{G})$.  Moreover, since the process $(X_t)_{0\leq t\leq 1}$ under $\nu$ is the Brownian bridge
$$
X_t \sim t X_1 + \sqrt{t(1-t)}Z,
$$
with $Z\sim \gamma_{E}$ independent of $X_1\sim \mu$, we can take expectations in Proposition \ref{prop:Follmer}(ii) to find, with the help of Fubini's theorem, that the minimum-energy process $ (u_t)_{0\leq t\leq 1}$ in \eqref{FollmerMinEnergy} satisfies 
\begin{align}
\EE[u_t]= \int_E x d\mu(x), \hspace{5mm} \mbox{a.e.~}0\leq t\leq 1. \label{eq:FollmerMean}
\end{align}   

We now record a simple application of the above, which will suit our needs.
\begin{theorem}\label{thm:existsDrifts}
Fix probability measures  $\mu_i \ll \gamma_{E_i}$ on $E_i$ satisfying $D(\mu_i\|\gamma_{E_i})<\infty$ for each $1\leq i \leq k$.   There is a filtered probability space  $(\Omega, \mathcal{A}, \mathbb{P}, \mathcal{F})$ carrying a  Brownian motion $B$ with $\Cov(B_1)=\id_{E_0}$ and a drift  $U_t = \int_0^t u_s ds$,  $u \in L^1([0,1];E_0 )$ such that, for each $1\leq i \leq k$, 
\begin{enumerate}
\item  $\mu_i \sim \pi_{E_i}(B_1+U_1) $.
\item  $D(\mu_i\|\gamma_{E_i}) = \frac{1}{2} \int_{0}^1\EE | \pi_{E_i}(u_s) |^2 ds$.
\end{enumerate}
Moreover, the processes
$$
(B^i, u^i)=  \big(\pi_{E_i}(B_t), \pi_{E_i}(u_t)\big)_{0\leq t \leq 1}, ~~1\leq i \leq k
$$
are independent.
\end{theorem}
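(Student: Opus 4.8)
The plan is to build the desired space as a product of $k$ independent copies of classical Wiener space --- one for each $\mu_i$ --- with the drift on each factor supplied by F\"ollmer's construction; the required independence of the processes $(B^i,u^i)$ is then an automatic consequence of the product structure.

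First I would handle each coordinate separately. Fix $1\le i\le k$, write $f_i:=d\mu_i/d\gamma_{E_i}$, and let $\mathbb{W}_i:=C^0([0,1],E_i)$ with Borel $\sigma$-algebra $\mathcal{B}_i$, Wiener measure $\gamma^{(i)}$, coordinate process $X^i$, and natural filtration $\mathcal{G}^i$. Form the bridge $\nu_i$ on $(\mathbb{W}_i,\mathcal{B}_i)$ via $\frac{d\nu_i}{d\gamma^{(i)}}(\omega)=f_i(\omega(1))$ as in \eqref{eq:bridgeConstruction}; since $D(\mu_i\|\gamma_{E_i})<\infty$, we have $D(\nu_i\|\gamma^{(i)})=D(\mu_i\|\gamma_{E_i})<\infty$ by data processing, so Proposition \ref{prop:Follmer} (equivalently \eqref{FollmerMinEnergy}) applies on $(\mathbb{W}_i,\mathcal{B}_i,\nu_i,\mathcal{G}^i)$ and produces a drift $U^i_t=\int_0^t u^i_s\,ds$ such that $B^i_t:=X^i_t-U^i_t$ is a standard Brownian motion on $E_i$, $B^i_1+U^i_1=X^i_1\sim\mu_i$, and $D(\mu_i\|\gamma_{E_i})=\tfrac12\int_0^1\EE_{\nu_i}|u^i_s|^2\,ds$.

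Then I would assemble the product. Set $\Omega:=\prod_{i=1}^k\mathbb{W}_i$ with the product (Polish) topology, let $\mathcal{A}$ be its Borel $\sigma$-algebra, put $\mathbb{P}:=\bigotimes_{i=1}^k\nu_i$, and let $\mathcal{F}_t:=\bigvee_{i=1}^k\mathcal{G}^i_t$ be the product filtration. Lifting each $B^i,u^i$ to $\Omega$ through the $i$-th coordinate map, define $B_t:=(B^1_t,\dots,B^k_t)\in E_0$, $u_t:=(u^1_t,\dots,u^k_t)$, and $U_t:=\int_0^t u_s\,ds$, so that $\pi_{E_i}(B_t)=B^i_t$ and $\pi_{E_i}(u_t)=u^i_t$. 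One checks that $U$ is an $\mathcal{F}$-adapted drift (each $u^i$ is, and $\int_0^1|u_s|^2\,ds=\sum_i\int_0^1|u^i_s|^2\,ds<\infty$ a.s.), and that $B=\bigoplus_i B^i$ is an $\mathcal{F}$-Brownian motion with $\Cov(B_1)=\id_{E_0}$: each increment decomposes as $B_t-B_s=\bigoplus_i(B^i_t-B^i_s)$, the factors being independent and each $B^i_t-B^i_s$ being $N(0,(t-s)\id_{E_i})$ and independent of $\mathcal{G}^i_s$, whence $B_t-B_s\sim N(0,(t-s)\id_{E_0})$ independent of $\mathcal{F}_s$. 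Since $(B^i,U^i)$ depends on a point of $\Omega$ only through its $i$-th coordinate, its law under $\mathbb{P}$ agrees with its law under $\nu_i$, which transfers properties 1 and 2 from the previous paragraph; and the same observation, combined with independence of the coordinates under $\mathbb{P}=\bigotimes_i\nu_i$, gives independence of the processes $(B^i,u^i)_{0\le t\le1}$. The only step requiring genuine (if routine) care is the verification that a direct sum of independent standard Brownian motions on the $E_i$ is a standard Brownian motion on $E_0$ adapted to the product filtration with the right covariance; everything else is bookkeeping around F\"ollmer's theorem.
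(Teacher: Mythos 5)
Your proposal is correct and follows essentially the same route as the paper: construct the F\"ollmer drift for each $\mu_i$ on its own Wiener space via the bridge \eqref{eq:bridgeConstruction} and Proposition \ref{prop:Follmer}, then assemble everything on a product probability space, from which the independence of the $(B^i,u^i)$ is immediate. The extra verification you supply (that the direct sum of independent standard Brownian motions is a standard Brownian motion on $E_0$ adapted to the product filtration) is the routine bookkeeping the paper leaves implicit.
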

\begin{proof}
For each $1\leq i\leq k$, let $\mathbb{W}_i = C^0([0,1]; E_i)$, $\mathcal{G}_i$ be its natural filtration,   $\mathcal{B}_i$ be the corresponding Borel $\sigma$-algebra, and $\gamma_i$ the Wiener measure.   Define measure $\nu_i\ll \gamma_i$ on $(\mathbb{W}_i,\mathcal{B}_i)$ by
$$
\frac{d\nu_i}{d\gamma_i}(\omega) =\frac{d\mu_i}{d\gamma_{E_i}}(\omega(1)), \hspace{5mm}\omega\in \mathbb{W}_i.
$$
By   Proposition \ref{prop:Follmer} and the subsequent discussion, there exists a drift $U_t^i= \int_{0}^{t} u^i_s ds$ and a standard Brownian motion $B^i$, both carried on $(\mathbb{W}_i, \mathcal{B}_i, \nu_i, \mathcal{G}_i)$, such that $\mu_i \sim B^i_1 + U^i_1$ and 
$$D(\mu_i\|\gamma_{E_i}) =D(\nu_i\|\gamma_i) = \frac{1}{2}\int_{0}^1\EE | u^i_s |^2 ds, \hspace{5mm}1\leq i\leq k.$$
Now,  put everything together on the product space $\Omega = \prod_{i=1}^k (\mathbb{W}_i\times \mathbb{W}_i)$ equipped with its natural filtration, the Borel sets, and the product measure $\mathbb{P} = \otimes_{i=1}^k  P_{B^i U^i}$.
\end{proof}

\end{document}